\def \mycol{2}
\def \coltwo{2}
\newtheorem{theorem}{Theorem}[section]
\newtheorem{lemma}[theorem]{Lemma}
\newtheorem{proposition}[theorem]{Proposition}
\newenvironment{proof}[1][Proof]{\begin{trivlist}
\item[\hskip \labelsep {\bfseries #1}]}{\end{trivlist}}
\newcommand{\qed}{\nobreak \ifvmode \relax \else
      \ifdim\lastskip<1.5em \hskip-\lastskip
      \hskip1.5em plus0em minus0.5em \fi \nobreak
      \vrule height0.75em width0.5em depth0.25em\fi}
\begin{document}

\title{Channel-Adaptive Packetization Policy for Minimal Latency and Maximal Energy Efficiency}

\author{
   \IEEEauthorblockN{
     Abolfazl Razi\IEEEauthorrefmark{1},
     Fatemeh Afghah \IEEEauthorrefmark{1} and
     Ali Abedi\IEEEauthorrefmark{2}}

   \IEEEauthorblockA{
     \IEEEauthorrefmark{1}Dept. of Electrical Engineering and Computer Science, Northern Arizona University} \\
     \IEEEauthorblockA{
     \IEEEauthorrefmark{2}Dept. of Electrical and Computer Engineering, University of Maine} 
 }
\maketitle
\begin{abstract}

This article considers the problem of delay-optimal bundling of the input symbols into transmit packets in the entry point of a wireless sensor network such that the link delay is minimized under an arbitrary arrival rate and a given channel error rate. The proposed policy exploits the variable packet length feature of contemporary communications protocols in order to minimize the link delay via packet length regularization. This is performed through concrete characterization of the end-to-end link delay for zero-error tolerance system with First Come First Serve (FCFS) queuing discipline and Automatic Repeat Request (ARQ) re-transmission mechanism. The derivations are provided for an uncoded system as well as a coded system with a given bit error rate.

The proposed \textit{packetization} policy provides an optimal packetization interval that minimizes the end-to-end delay for a given channel with certain bit error probability. This algorithm can also be used for near-optimal bundling of input symbols for dynamic channel conditions provided that the channel condition varies slowly over time with respect to symbol arrival rate. This algorithm complements the current network-based delay-optimal routing and scheduling algorithms in order to further reduce the end-to-end delivery time. Moreover, the proposed method is employed to solve the problem of energy efficiency maximization under an average delay constraint by recasting it as a convex optimization problem\footnote{The earlier version of this work was presented in part at the 48th Annual Conference in Information Sciences and Systems (CISS), held on March 19-21, 2014 at Princeton University.}.

\end{abstract}

\begin{IEEEkeywords}
Packetization policy, cross-layer optimization, channel adaptation, delay analysis, queuing systems.
\end{IEEEkeywords}

\IEEEpeerreviewmaketitle

\section{Introduction}

In a variety of wireless applications, the pivotal design objective is to minimize the end-to-end latency or to ensure a predefined delay constraints \cite{delayopt_sch2011, sch_2014_thr_delay, 2014_sch_no_backpressure, delay_analysis_11, dlimit1, thrput_under_delay_2013}. 
For instance, in a variant of sensor networks, the arrival packets are marked outdated if the age of information (the timespan from data generation at sensor nodes until delivery to the processing unit) exceeds a predefined limit. This constraint imposes a hard limit on the system end-to-end delay~\cite{Yates_2012_age}.
Majority of these studies focus on developing optimal routing and scheduling policies performed in higher network layers. A common presumption in such methods is that the per-link throughput and delay parameters of a given transmission technique are physical layer parameters, which are mainly determined by out-of-control channel conditions such as noise and interference levels as well as the local user traffic statistics, therefore can not be improved by varying higher layer parameters \cite{delay_aware_scheduling_2013, delay_analysis_11}. In this work, we provide a time-based packetization policy for the entry point of the network that combines the input symbols (e.g. a sensor measurements in a wireless sensor network) into transmit packets such that the resulting per-link delay is minimized by regularizing packet lengths. This algorithm can be used in combination with the current in-network delay-optimal scheduling policies to further minimize the end-to-end delay.

In a packet-based transmission system, a larger packet size reduces the \textit{packetization} overhead. This overhead may be due to the packet header (e.g. addressing bits, control bits, and CRC codes), channel setup time, or even channel contention period in wireless networks with opportunistic scheduling \cite{overhead3, overhead2, Contention_2010}. The lower overhead translates to a shorter average transmission time for each data symbol. 
On the other hand, longer packets may increase the transmission time by imposing longer packet formation time, since the payload data is not accessible at the destination until a packet is formed at the transmitter and is fully delivered to the destination. Moreover, in a noisy environment with a certain bit error probability, a longer packet size increases delay by elevating re-transmission rate~\cite{plen_throughput_energy_efficiency_98, overhead1,overhead5, optimal_packet_masking_2010}.
In other words, packet length has two contradictory effects on the end-to-end latency per symbol. Addressing this essential trade-off and finding the optimum packet length is a key factor to improve the  communication efficiency, which has been mostly overlooked in the conventional system designs.

Old generation communication protocols such as GSM and ATM permit only fixed-length packets. 
However, the contemporary communication protocols such as IEEE 802.16 and IPv6 allow variable length packets to more efficiently adapt to user traffic demands~\cite{ATM98, IEEE-standards1,Deering:1998:IPV:RFC2460}. In this paper, we exploit this feature to improve the average packet delivery time for wireless networks.

Recently, several attempts have been made to increase communication efficiency by customizing packet lengths based on the channel quality factors. For instance, the idea of local packet length adaptation is introduced in \cite{packet_length_berkley2011} in order to maximize throughput in WLAN channels. An approximate blocking probability is found for general packet length distributions in \cite{Blocking_Prob_Delay_2012}. The impact of packet lengths on other performance metrics such as latency, communication range and energy efficiency is also studied in \cite{plen_delay_throughput_ITWC_2004, sensor_2008_plen, plen_throughput_energy_efficiency_98}.
However, the impact of packet length on the end-to-end data delivery time yet to be comprehensively studied, although it is implicitly addressed by re-transmission probability minimizations~\cite{plen_throughput_1996, plen_throughput_1998, plen_throughput_2004, nelson2008_plen}. Besides, most previously reported approaches for wireless Ad-hoc networks consider the saturated traffic model and aim at optimizing throughput and delay by solely reducing the number of dropped packets while ignoring the queuing dynamics. The saturated traffic model does not cover random traffic in most real-world applications such as web-based applications and Ad-hoc sensor networks. Therefore, we consider a probabilistic traffic model, where a stream of input symbols (e.g. measurements of a data source in a sensor network) are generated according to a Poisson process.

In the proposed packetization policy, we aim at regularizing the packet lengths at the entry point of a wireless sensor network, when the measurement data is bundled into transmit packets and is injected to the network in order to minimize the per-link delay for a given channel error rate and input traffic rate. This is performed by characterizing the impact of the packet length on queuing dynamics, re-transmission rate and consequently the end-to-end delay.
We consider a single-hop communication system with a First Come First Serve (FCFS) scheduler, an unlimited buffer size and an Automatic Repeat Request (ARQ) re-transmission mechanism. Although simple, this model highlights and solves the relevant trade-offs and provides insights for more general systems \cite{PER-ARQ1}. In particular, this model can be deployed at the entry point of sensor networks, where the sensor measurement symbols are formed into transmit packets. Furthermore, it can be integrated with delay-optimal scheduling techniques for multi-hop communications. 
In addition to end-to-end delay analysis, we also characterize the required energy per unit symbol for scenarios with energy efficiency optimization under delay constraint requirements.

A time-based aggregation policy is proposed in \cite{OPTIC-Burst} to optimally combine the Poisson-distributed arrival packet bursts for optical networks application. In contrast, we focus on packet generation from the arrival samples and consider the impact of header size and packet re-transmissions. This work is also closely related to \cite{Ephremides_Shrader_lc1,Ephremides_Shrader_lc2}, where the packets in the buffer are bundled into batches up to a certain number and are transmitted after random linear coding over a noisy channel. The service of a batch is completed when all the containing packets are recovered at the destination. Controlling the number of packets in each linear coding block in \cite{Ephremides_Shrader_lc1} is analogous to defining the number of symbols per packets in our proposed scheme. However, the main differences between this work and \cite{Ephremides_Shrader_lc1,Ephremides_Shrader_lc2} are: i) the packet arrival process being Poisson instead of late arrival Bernoulli distribution; ii) considering packet success rate dependency on the indirectly controlled packet length in our scheme; and iii) studying the effect of packetization overhead in delay analysis.

The rest of this paper is organized as follows.
In section~\ref{sec:sysmodel}, the system model is presented. In section \ref{sec:t}, the problem of channel adaptive packetization policy is formulated and the end-to-end latency is analyzed via characterizing various delay elements for a time-based packetization policy. The delay-optimal packetization criterion is found in section \ref{sec:optimal}. Simulation results are provided in section \ref{sec:simulation}, followed by concluding remarks in section \ref{sec:conclusion}.
	
\section{System Model} \label{sec:sysmodel}

Before elaborating on the system model, the notation style used in this manuscript is summarized. Lowercase boldface letters are used for scalar random variables, the capital letters for constants and fixed parameters, the lowercase letters for variables and realizations of random variables, unless otherwise specified explicitly. Subscripts are used to note symbol and packet indices and postscripts are only used for power operation or as type identifier. 
A summary of the parameters is presented in Table I.  
\begin{table}
\caption{Notation summary.}
\centering
\begin{tabular}{ |p{2cm} | p{1cm}| p{4.cm}| }
  \hline                     
  parameter & unit & description \\  
  \hline                     
  $N$ & bit & number of bits for input symbols\\
  $H$ & bit & number of header bits\\
  $\eta=H/N$ & \textemdash & header load ratio\\
  T & sec & adjustable packetization interval\\
  $S_i$ & \textemdash  & $i^{th}$ input symbol\\
  $X_n$ & \textemdash  & $n^{th}$ transmit packet\\
  $\lambda$ & 1/sec & input symbol arrival rate\\
  $\mathbf{t}_i$ & sec & arrival time for symbol $i$ \\
  $\mathbf{\mathbf{\theta}}_i=\mathbf{t}_i-\mathbf{t}_{i-1}$ & sec & interarrival time for symbol $i$ \\
  $\mathbf{f}_i$ & sec & packet formation delay\\
  $\mathbf{d}_i$ & sec & symbol delivery time\\
  $\mathbf{k}_n$ & \textemdash  & number of symbols in packet $n$\\
  $\mathbf{l}_n$ & bit& number of bits in packet $n$ \\
  $\mathbf{a}_n$ & sec & arrival time for packet $n$ \\
  $\bm{\tau}_n=\mathbf{a}_n-\mathbf{a}_{n-1}$ & sec & inter-arrival time for packet $n$\\
  $\beta$ & \textemdash  & bit error probability\\
  $\alpha=1-\beta$ & \textemdash  & bit success probability\\
  $R$ & bit/sec & channel transmission rate \\
  $P_t$ & Watt & transmission power \\
  ECR & Joule/bit & Energy Consumption Rating\\
  $\mathbf{p}^e_n$ &   packet/sec & packet error probability for packet $n$\\
  $\gamma$ & \textemdash  & Euler constant\\
  $\mathbf{s}_n$ & sec  & service time for packet $n$\\
  $\mathbf{w}_n$ & sec & waiting time for packet $n$\\
  \hline  
\end{tabular}
\end{table}

A sequence of $N$-bit symbols $\{S_i\}_{i=0}^{\infty}$ arrives at the input of transmission system according to a Poisson process with rate $\lambda$. 
The symbols are combined into packets with a constant header size $H$, then scheduled in an infinite length queue with FCFS discipline and transmitted through a wireless channel with bit rate $R$ to the destination, as depicted in Fig.~\ref{fig:sysmodel}. 
\begin{figure}[b]
\centering
\includegraphics[width=1\columnwidth]{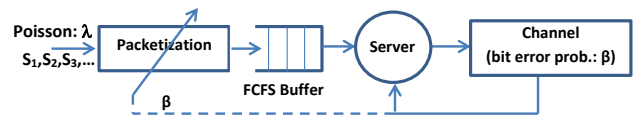}  
\caption{System model: the fixed-length input symbols $\{X_1, X_2, X_3,\dots \}$ with arrival rate of $\lambda$ are combined into transmit packets and are scheduled in a FCFS buffer for transmission over a single-hop uncoded wireless channel with bit error probability $\beta$.}
\label{fig:sysmodel}
\end{figure}

In order to bundle the symbols into the transmit packets, we adopt a \textit{time-based packetization policy}, where the time axis is partitioned into consecutive equal packetization intervals of size $T$. The $\mathbf{k}_n$ symbols that arrive at the $n^{th}$ interval $\big[(n-1)T, nT\big)=\{t|(n-1)T \leq t < nT\}$ are combined to form a single transmission packet $X_n$ and is scheduled for transmission. We propose two different implementation modes. The distinction between these modes is the way we handle the intervals with zero arrival symbols. In \textit{mode 1 (efficient mode)}, no packet is sent for zero symbol accumulation and hence formation of the transmit packet is postponed to the subsequent intervals. Therefore, the inter-arrival for packet $n$, denoted by $\bm{\tau}_n$ can be multiples of packetization interval, $T$. This mode is more efficient and achieves higher channel utilizations in practice. However, in \textit{mode 2 (slotted mode)} a dummy packet of size $H$ is sent for zero symbol accumulation and we have $\bm{\tau}_n=T$. This approach is desired for slotted systems with a constant-length time slot and has the advantage of easy synchronizations and less complex analysis. We represent the two modes with $\mathscr{M}1$ and $\mathscr{M}2$ for the sake of brevity.
In this article, the main focus is on the efficient mode (i.e. $\mathscr{M}1$), but we occasionally mention the difference with the slotted mode (i.e. $\mathscr{M}2$). Both packetization modes are depicted in Fig. \ref{fig:framingTB}. 
\begin{figure}[ht]
\centering
\includegraphics[width=1\columnwidth]{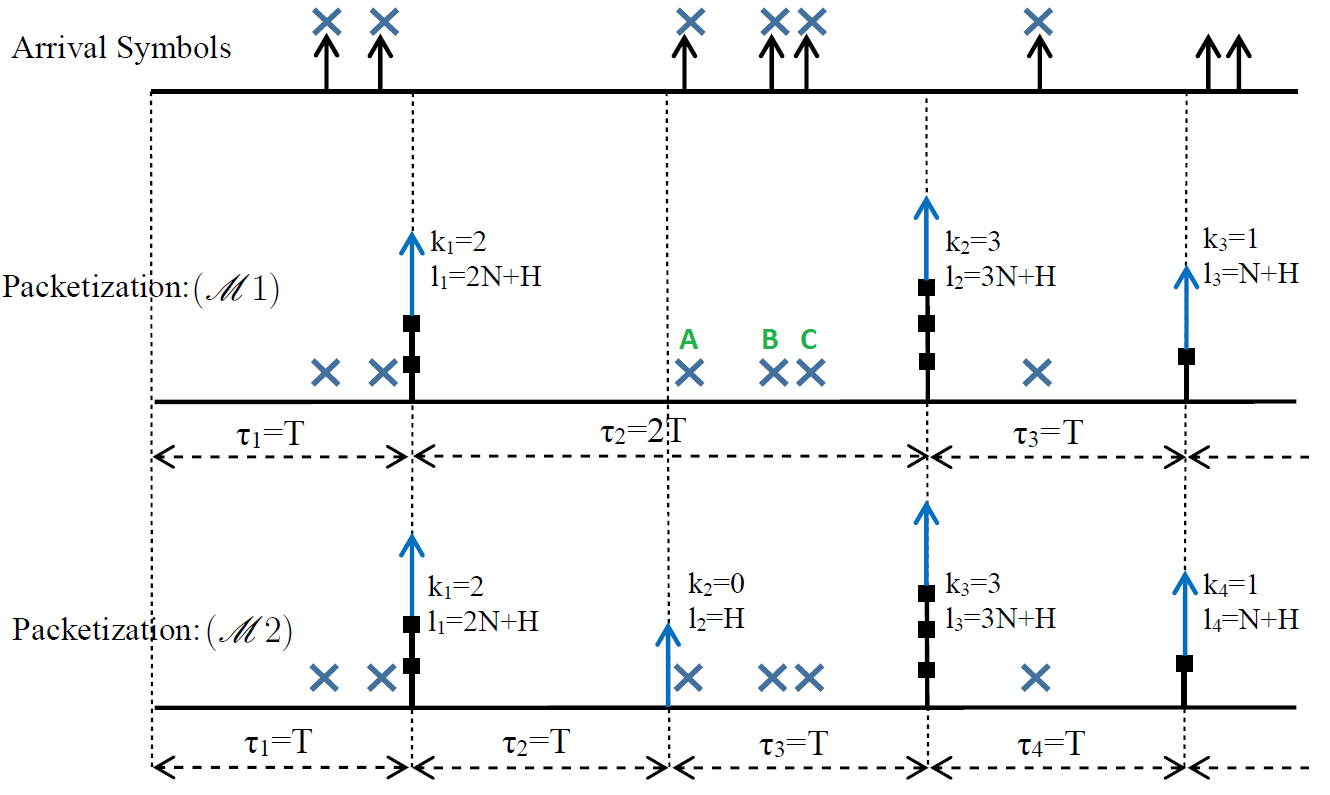} 
\caption{Time-based \textit{packetization policy}: symbols arive at a packetization interval are combined to form transmit packets with constant header sizes. If no symbol is arrived at a packetization interval, packet formation is postponed to the subsequent interval in mode $\mathscr{M}1$, whereas a dummy packet of length $H$ is send in mode $\mathscr{M}2$.}
\label{fig:framingTB}
\end{figure}

The number of bits in the $n^{th}$ packet, $\mathbf{l}_n = H+\mathbf{k}_nN$ is fully determined by the number of encapsulated symbols $\mathbf{k}_n$. For \textit{slotted mode} $\mathscr{M}2$, it is clear that $\mathbf{l}_n$ is Poisson distributed. In \textit{efficient mode} $\mathscr{M}1$, however the packet inter-arrival time $\tau_n$ can span over $\mathbf{m}_n =1,2,3,\dots$ intervals, where all the symbols arrive at the last interval. Therefore, we have
\begin{align} \label{eq:plen}
\nonumber
&\mathbb{P}(\mathbf{l}_n=H+kN)=\\  
&\begin{cases}
\sum_{m_n=1}^{\infty} \mathbb{P}(\mathbf{l}_n=H+kN,\bm{\tau}_n=m_nT)\\ ~~~~~=\sum_{m_n=1}^{\infty}   (P_{\mathbf{k}_n}(0))^{m_n-1}e^{-\lambda T} \frac{(\lambda T)^k}{k!}\\~~~~~= \frac{e^{-\mu} \mu^k}{(1-e^{-\mu} )k!}, ~~~k=1,2,3,\dots ~~~(\mathscr{M}1),\\
\frac{e^{-\mu} \mu^k}{k!}, ~~~~~~~~~~~~~~~~k=0,1,2,\dots ~~~(\mathscr{M}2),
\end{cases}
\end{align}
where $\mu=\lambda T$ is the average number of symbols in an interval of length $T$. The summation in the efficient mode $\mathscr{M}1$ is over various lengths of the packet inter-arrival time.

It is noteworthy that the packet formation time for each packet in the efficient mode $\mathscr{M}1$ is solely defined by the arrival of the first symbol after the preceding packet formation epoch, whereas the packet length is defined by the number of containing symbols or equivalently the number of symbols that arrive after the first symbol and before the end of the current packetization interval. For instance, in Fig. \ref{fig:framingTB}, symbol A arrives in the second interval after packet 1 formation epoch, which delays the packet 2 formation to the end of the second interval with respect to packet 1 formation epoch. However, the length of the second packet is $3N+H$, which is defined by the arrival of symbols B and C after symbol A and before the end of the current packetization interval. Inter-arrival times between the symbols are independent due to the memoryless property of Poisson arrival process. Consequently, the length of packets are independent form their formation time with respect to the preceding packet. This property is also apparent form Eq (\ref{eq:plen}). For slotted mode $\mathscr{M}2$, the independence of packet lengths and the inter-packet times are trivial, since the inter-packets times are fixed. The independence of packet lengths and the inter-packet times simplifies the queuing system analysis through legitimate use of Kingman's formula \cite{Kingman}.

The resulting packets are transmitted over a single-hop uncoded wireless channel with bit error probability of $\beta$. 
For zero error tolerance scenario, the probability of packet error $\mathbf{p}^e_n$ for packet $n$ with length $\mathbf{l}_n$ is a random variable (r.v.) and defined as\footnote{The above equations are derived for uncoded system, where the packets include only information bits. In more sophisticated protocols, different coding schemes for header part and payload data may be used \cite{IEEE802-11-1, IEEE802-11-2}. If the coding rates are $R_D$ and $R_H$ for Data and Header with respective bit error probabilities of $\beta_D$ and $\beta_H$, then one needs to incorporate the following modifications: $\mathbf{l}_n = \mathbf{k}_n  N / R_D + H / R_H$   and $\mathbf{p}^e_n =   1- (1- \beta_H)^H (1-\beta_D)^{\mathbf{k}_n N}$. The rest of equations remain unchanged. In fact, setting $R_D=R_H=1$ and $\beta_H=\beta_D=\beta$, these equations reduce to the derived equations. In the above derivations, we followed the popular approximation of using bit and packet error rates as reliable surrogates for bit and packet error probabilities. Since the above modifications do not change the subsequent analysis, we stick with an uncoded system in the rest of the paper for the sake of notation convenience. However, simulations results are included for a coded system in section \ref{sec:simulation}.}

\begin{align}
\mathbf{p}^e_n = 1-(1-\beta)^{\mathbf{l}_n} = 1-\alpha^{\mathbf{l}_n},
\end{align}
where $\alpha=1-\beta$ is bit success probability used for notation convenience in the subsequent equations. The erroneous packets are successfully detected at the destination using CRC codes and are re-transmitted using ARQ scheme with instantaneous feedback channel until they are successfully delivered to the destination. The number of transmissions for packet $n$, denoted by $\mathbf{r}_n \in \{1,2,3,...\}$ is a Geometrically distributed r.v. and depends on the packet length $\mathbf{l}_n$ and bit success probability $\alpha$ as follows:

\begin{align} \label{eq:arq}
P_{\mathbf{r}_n|\mathbf{l}_n}(r | \mathbf{l}_n=l)=  \alpha^l(1-\alpha ^l )^{r-1}.   
\end{align}

Hence, the expected value of number of re-transmissions can be simply found by calculus of power series as follows:
\begin{align} \label{eq:arq2}
\nonumber
\mathbb{E}_{\mathbf{r}_n|\mathbf{l}_n}[r| \mathbf{l}_n=l] &=\sum_{r=1}^{\infty}r P_{\mathbf{r}_n|\mathbf{l}_n}(r | \mathbf{l}=l) \\&=\sum_{r=1}^{\infty} r \alpha^l (1-\alpha^l) ^{r-1}=\alpha^{-l}.
\end{align}

The $i^{th}$ symbol of the $n^{th}$ packet experiences the end-to-end delay $\mathbf{d}_i$, which includes packet formation delay, waiting time and service time, denoted by $\mathbf{f}_i$, $\mathbf{w}_i$ and $\mathbf{s}_i$, respectively. Packet formation delay $\mathbf{f}_i$ is the time difference between the symbol arrival epoch and the corresponding packet formation epoch. Thus, may be different for symbols inside a packet. Whereas, $\mathbf{w}_i$ and $\mathbf{s}_i$ are packet-based delays and are equal for all symbols in packet $n$ and account for waiting and service times for both primary and retransmit periods. Therefore, we have $\mathbf{w}_i=\mathbf{w}_n \text{ and } \mathbf{s}_i = \mathbf{s}_n$ for all symbols $i$ forming packet $n$. The goal is to find the optimal packetization policy that minimizes the expected average delay based on detail characterization of different delay terms. Under stability conditions, in stationary states the expected delay of packets are equal and so is the expected delay of the samples ($\mathbb{E}[\mathbf{d}]=\mathbb{E}[\mathbf{d}_i]$). Moreover, due to the ergodicity of the queue, we use the time average of the symbols delays obtained from simulations in section \ref{sec:simulation} to compare with analytically derived expression for expected delay $\mathbb{E}[d]$ using the following relation: 

\begin{align} \label{eq:d2}
\mathbb{E}[\mathbf{d}]=\mathbb{E}[\mathbf{d}_i]=\underset{t\rightarrow \infty}{\lim} \frac{1}{M(t)}\sum_{i=1}^{M(t)} d_i,
\end{align}
where $M(t)=\max~\{i:t_i<t\}$ is the number of symbols arrived by time $t$.

\section {Delay Terms in Time Based Framing Policy} \label{sec:t}
In this section, various delay terms for the proposed time-based packetization policy are evaluated.

\subsection{Packet Inter-arrival time} 

The interarrival times between transmit packets are constant in the \textit{slotted mode} and hence we have a deterministic packet generation process $\bm{\tau}_n=T_n-T_{n-1}=T$ with the following moments for interarrival times:
\begin{align} 
\label{eq:ETs}
&\bm{\tau}_n=T  \Longrightarrow ~\mathbb{E}[\bm{\tau}_n]=T, ~\mathbb{E}[\bm{\tau}_n^2]=T^2, ~\sigma^2[{\bm{\tau}_n}]=0 & (\mathscr{M}2).
\end{align} 

However, in the \textit{efficient mode}, the time between two packetization epochs, $\tau_n$ can be extended to multiples of the packetization interval $T$. Note that $\bm{\tau}_n = \mathbf{m}_n T,~\mathbf{m}_n=1,2,3,\dots$ corresponds to zero symbol accumulation at the first $\mathbf{m}_n-1$ intervals followed by an interval with nonzero arrival symbols. Therefore, $\bm{\tau}_n$ is Geometrically distributed with fail parameter $P_0=e^{-\lambda T}$ and we have the following first and second order moments for service time:
\begin{align}
\label{eq:ET}
\nonumber
\mathbb{P}(\bm{\tau}_n&=m_n T) = P_0^{m_n-1}(1-P_0),~~~m_n=1,2,3,\dots\\
\nonumber
&\Longrightarrow \mathbb{E}[\bm{\tau}_n]=\frac{1}{1-P_0}T, ~~~\mathbb{E}[\bm{\tau}_n^2]=\frac{1+P_0}{(1-P_0)^2}T^2,\\
&~~~~~~ \sigma^2[{\bm{\tau}_n}]=\frac{P_0}{(1-P_0)^2}T^2 , ~~~~~~~~~~~(\mathscr{M}1).
\end{align}

\subsection{Service time}

Now, we proceed to characterize service time, denoted by $\mathbf{s}$ for $\mathscr{M}1$. 
For derivation simplicity, we first consider an equivalent system, where at each interval a packet of possibly zero size is formed. The auxiliary service time $\tilde{\mathbf{s}}$ is defined for this equivalent system, such that a virtual dummy packet of length $0$ is generated for zero symbol accumulation during a packetization interval. 
The actual and auxiliary service times are related as follows
\begin{align} \label{eq:ss}
\tilde{\mathbf{s}}_n=\begin{cases} \mathbf{s}_n & \text{for }k \neq 0 \text{ with probability } 1-P_0=1-e^{-\mu} \\
0  & \text{for }k = 0 \text{ with probability } P_0=e^{-\mu}\\
\end{cases}
\end{align}

The length of packets at this equivalent system is $\tilde{\mathbf{l}}_n=h(\mathbf{k}_n)H + \mathbf{k}_nN$, where $h(.)$ is step function defined as 
\begin{align} \label{eq:hk}
h(\mathbf{k}_n) = \begin{cases}
1 & \mathbf{k}_n > 0,\\
0 &\mathbf{k}_n \leq 0.
\end{cases}
\end{align}

Note that $\mathbf{k}_n=M(iT)-M(iT-T)\in \{0,1,2,...\}$ is a nonnegative Poisson distributed integer with mean $\mu$, hence we have
\begin{align} \label{eq:l1}
\nonumber
\mathbb{P}(\tilde{\mathbf{l}}_n = h(k)H + kN )= \mathbb{P}(\mathbf{k}_n &= k)= \frac{e^{-\mu}\mu^k}{k!},\\& k=0,1,2,\dots ~.
\end{align} 

The service time of a packet of length $\tilde{\mathbf{l}}_n$ accounts for sending $\tilde{\mathbf{l}}_n$ bits over the channel, which might be repeated for $\mathbf{r}$ times due to the ARQ re-transmission mechanism.
Noting the fixed transmission rate $R$ bit/sec, re-transmission probability in (\ref{eq:arq}) and packet length probability mass function (pmf) in (\ref{eq:l1}), the following pmf is derived for service time, $\tilde{\mathbf{s}}_n$:

\begin{align}  \label{eq:ps1}
\nonumber
\mathbb{P}\big[\mathbf{\tilde{s}}_n&=\frac{r}{R}.\big(h(k)H+kN\big)\big]\\
\nonumber
&=\mathbb{P}\big[\mathbf{\tilde{s}}_n=\frac{r}{R}.\big(h(k)H+kN\big)|\mathbf{k}_n=k\big] \mathbb{P}[\mathbf{k}_n=k] \\
\nonumber
&=\mathbb{P}\big[\mathbf{r}_n=r|\mathbf{k}_n=k\big] \mathbb{P}[\mathbf{k}_n=k] \\
\nonumber
&=\alpha^{H+kN} \big [1-\alpha^{H+kN}] ^{r-1} e^{-\mu}\frac{\mu^k}{k!},\\
& ~~~~~~~~~~~~~~~ r=1,2,3,...;~ k=0,1,2,...~.
\end{align}

Substituting $h(.)$ in eq (\ref{eq:ps1}), it converts to the following expression:
\begin{align}  \label{eq:fs1}
\mathbb{P}[\mathbf{\tilde{s}}_n]= \begin{cases}
\alpha^{H+kN} \big [1-\alpha^{H+kN}] ^{r-1} e^{-\mu}\frac{\mu^k}{k!}, \\
~~~~~~~~~~~~\text{for } \mathbf{\tilde{s}}_n = \frac{r(H+kN)}{R}, ~~r,k=1,2,\dots,\infty, \\
e^{-\mu}, ~~~~~~ \text{for } \mathbf{\tilde{s}}_n = 0.
\end{cases}
\end{align} 

The countable discrete support set of $\tilde{s}$ is $\tilde{\mathcal{S}}=\{ 0\} \cup \{ r(H+kN)/R: ~r,k=1,2,3,... \}$. Likewise, we have the following pmf of $s$ considering equation (\ref{eq:ss}):
\begin{align}  \label{eq:fs2}
\nonumber
\mathbb{P}[\mathbf{s}_n &= \frac{r(H+kN)}{R}]= 
\frac{\alpha^{H+kN} \big [1-\alpha^{H+kN}] ^{r-1} }{1-e^{-\mu}}e^{-\mu}\frac{\mu^k}{k!},\\ &~~~~~~~~~~~ \text{for }r,k=1,2,\dots,\infty, 
\end{align} 
with the same support set excluding $\{0\}$. A typical pmf of $\mathbf{s}_n$ is depicted in Fig. \ref{fig:ps}. It is noticeable that the pmf presents a comb-like shape and is not a monotonic function. In fact, if we fix the number of encapsulating symbols $\mathbf{k}_n$, the packet length is $\mathbf{l}_n=\mathbf{k}_nN+H$ and the service time conditioned on the number of symbols is geometrically distributed with success parameter $\alpha^{\mathbf{l}_n}$ due to re-transmission occurrence. This is highlighted in red color in Fig. \ref{fig:ps} for $\mathbf{k}_n=3$. On the other hand, for a given number of re-transmissions $\mathbf{r}_n$, service time is proportional to the packet length and hence demonstrates a Poisson-like conditional distribution, which is marked with blue color in Fig.\ref{fig:ps} for $\mathbf{r}_n=2$. The unconditional pmf of $\mathbf{s}_n$ can be viewed as the interlace of a series of Geometrical distributions scaled with Poisson functions (or vice versa). This pmf is used in the sequel to derive the moments of service time as required delay analysis parameters as follows.

\begin{figure}[h]
\centering
\includegraphics[width=1\columnwidth]{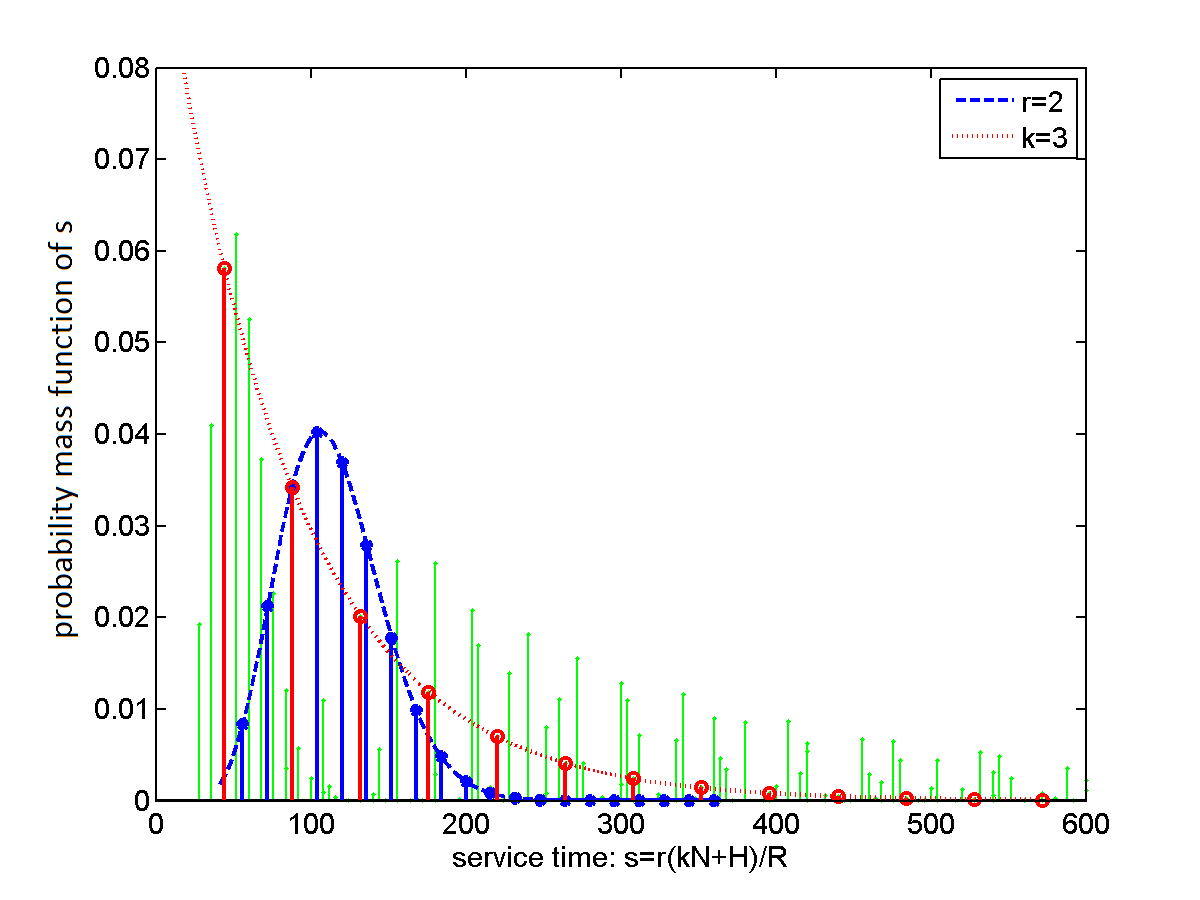} 
\caption{Probability mass function of service time \\$(N=8, H = 20, \lambda = 1, \beta=0.02, R=1)$.}
\label{fig:ps}
\end{figure}

\begin{proposition}  \label{prop:2}
Service time $\mathbf{s}_n$ in \textit{efficient mode} has the following first and second order moments:
\end{proposition}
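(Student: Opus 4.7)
The plan is to compute both moments by iterated expectation, conditioning on the packet length $\mathbf{l}_n$. From equation (\ref{eq:arq}) we already know that, given $\mathbf{l}_n = l$, the number of transmissions $\mathbf{r}_n$ is geometric with success probability $\alpha^l$, so $\mathbf{s}_n \mid \mathbf{l}_n = l$ is a geometric random variable scaled by $l/R$. The standard geometric moments then give $\mathbb{E}[\mathbf{s}_n \mid \mathbf{l}_n = l] = (l/R)\alpha^{-l}$ and $\mathbb{E}[\mathbf{s}_n^2 \mid \mathbf{l}_n = l] = (l/R)^2(2\alpha^{-2l}-\alpha^{-l})$. Averaging over $\mathbf{l}_n$ using the length pmf (\ref{eq:plen}) for $\mathscr{M}1$ reduces the computation to two explicit sums over $k$.

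Concretely, I would first write
\begin{equation*}
\mathbb{E}[\mathbf{s}_n] = \frac{e^{-\mu}}{R(1-e^{-\mu})}\sum_{k=1}^{\infty}(H+kN)\,\alpha^{-(H+kN)}\,\frac{\mu^k}{k!},
\end{equation*}
factor $\alpha^{-H}$ out, and introduce the auxiliary variable $\nu := \mu\alpha^{-N}$ so that the two sums that appear are $\sum_{k\ge 1}\nu^k/k! = e^{\nu}-1$ and $\sum_{k\ge 1}k\,\nu^k/k! = \nu e^{\nu}$. This collapses $\mathbb{E}[\mathbf{s}_n]$ to a closed-form expression in $\mu$, $\nu$, $H$, $N$, $R$ and $\alpha$. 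Because the summations start at $k=1$ rather than $k=0$, the $-1$ term from $e^{\nu}-1$ combines cleanly with the overall $e^{-\mu}/(1-e^{-\mu})$ prefactor, which is the origin of the nonzero-packet conditioning in $\mathscr{M}1$.

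For the second moment I would repeat the same recipe with an extra factor of $(H+kN)^2$ and with $\alpha^{-2l}$ in place of $\alpha^{-l}$ for the leading term. Expanding $(H+kN)^2 = H^2 + 2HNk + N^2 k^2$ produces three sums; the new one is $\sum_{k\ge 1}k^2 \xi^k/k! = (\xi+\xi^2)e^{\xi}$, which is the only additional identity needed. The dominant term uses $\xi := \mu\alpha^{-2N}$ while the subtracted $\alpha^{-l}$ term reuses $\nu = \mu \alpha^{-N}$, so the final expression for $\mathbb{E}[\mathbf{s}_n^2]$ is a linear combination of six closed-form quantities.

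The genuine obstacle here is bookkeeping rather than any deep inequality: one has to keep track of (i) the $1-e^{-\mu}$ normalization from dropping empty intervals in $\mathscr{M}1$, (ii) the shift by $H$ inside both the packet length and its exponent, and (iii) the two distinct ``effective rates'' $\nu$ and $\xi$ that arise from the first and second moments of the geometric retransmission count. If the stated proposition presents its moments in terms of $\mu$, $\nu=\mu\alpha^{-N}$ (and $\xi=\mu\alpha^{-2N}$ for the second moment), a direct term-by-term match with the series evaluations above finishes the proof; otherwise a purely algebraic rearrangement relates my form to the author's. No convergence issue arises as long as $\nu, \xi < \infty$, which holds whenever $\alpha > 0$ and $\mu < \infty$.
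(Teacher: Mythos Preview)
Your proposal is correct and follows essentially the same route as the paper: both condition first on the packet length to reduce to geometric moments $\mathbb{E}[\mathbf{r}\mid l]=\alpha^{-l}$ and $\mathbb{E}[\mathbf{r}^2\mid l]=2\alpha^{-2l}-\alpha^{-l}$, and then average over the Poisson-type length distribution using the identities $\sum_{k\ge1}\nu^k/k!=e^{\nu}-1$, $\sum_{k\ge1}k\nu^k/k!=\nu e^{\nu}$, and $\sum_{k\ge1}k^2\xi^k/k!=(\xi+\xi^2)e^{\xi}$. The only cosmetic difference is that the paper introduces an auxiliary service time $\tilde{\mathbf{s}}$ based on an ordinary Poisson (via the step function $h$) and divides by $1-e^{-\mu}$ at the end, whereas you work directly with the zero-truncated Poisson from the outset; the computations coincide term by term.
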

\begin{align} 
\label{eq:Es1}
&\mathbb{E}[\mathbf{s}_n] =\frac{Ne^{-\mu}}{(1-e^{-\mu})R\alpha^{H}} \Big [  (\eta + \mu{\alpha}^{-N}) e^{ \mu \alpha^{-N} } -\eta  \Big]\\,
\label{eq:Es2}
\nonumber
&\mathbb{E}[\mathbf{s}_n^2]=\frac{N^2 e^{-\mu}}{(1-e^{-\mu}) R^2 \alpha^H}  \Big [ \big ( 2 \mu \alpha^{-2N} + 2 \mu^2 \alpha^{-4N} +\\
\nonumber
& 4\eta \mu \alpha^{-2N} + 2\eta^2  \big) \alpha^{-H} e^{\mu \alpha^{-2N}}  -\big (  \mu \alpha^{-N} +  \mu^2 \alpha^{-2N} +\\
& 2\eta \mu \alpha^{-N} + \eta^2 \big) e^{\mu \alpha^{-N}} +\eta^2 \big(1-2\alpha^{-H}\big)  \Big ], (\mathscr{M}1)
\end{align}
where the header overhead $\eta=H/N$ is defined as the ratio of header size to symbol size.
\begin{proof}
See Appendix.
\end{proof}

The above derivation is obtained for the \textit{efficient mode} based on postponing packet formation to an interval with non-zero arrival symbols. In order to obtain service time distribution for the \textit{slotted mode}, we repeat the same procedure with setting $\tilde{\mathbf{s}}_n=\mathbf{s}_n$ in (\ref{eq:ss}) and replacing the indicator function $h(.)$ with unity function $h(.)=1$ in (\ref{eq:hk}). Consequently, we have $\tilde{\mathbf{l}}_n=\mathbf{l}_n$ and characterizations of $\mathbf{s}_n$ in (\ref{eq:fs2}) encounters the following modification:
\begin{align} 
\nonumber 
\mathbb{P}[\mathbf{s}_n &= \frac{r(H+kN)}{R}] = \alpha^{H+kN} \big [1-\alpha^{H+kN}] ^{r-1} e^{-\mu}\frac{\mu^k}{k!}, \\
& ~~~~\text{ for }  r=1,2,3\dots;~ k=0,1,2\dots
\end{align}

Therefore, the following simplified expressions for the moments of $\mathbf{s}_n$ are obtained after some algebraic manipulations. 
\begin{align} 
\label{eq:Es1-slotted}
&\mathbb{E}[\mathbf{s}_n] =\frac{Ne^{-\mu}}{R\alpha^{H}} \Big [  (\eta + \mu{\alpha}^{-N}) e^{ \mu \alpha^{-N} } \Big], ~~~~(\mathscr{M}2)\\
\label{eq:Es2-slotted}
\nonumber
&\mathbb{E}[\mathbf{s}_n^2]=\frac{N^2 e^{-\mu}}{R^2 \alpha^H}  \Big [ \big ( 2 \mu \alpha^{-2N} + 2 \mu^2 \alpha^{-4N} + 4\eta \mu .\alpha^{-2N} + 2\eta^2 \big) \\
&.\alpha^{-H}  e^{\mu \alpha^{-2N}} -\big (  \mu \alpha^{-N} +  \mu^2 \alpha^{-2N} + 2\eta \mu \alpha^{-N} + \eta^2 \big) e^{\mu \alpha^{-N}}\Big] .
\end{align}
Note that equations (\ref{eq:Es1-slotted},\ref{eq:Es2-slotted}) are equivalent to (\ref{eq:Es1},\ref{eq:Es2}) for large enough $\mu$, since the distinction between two packetization modes is due to intervals with zero arrival symbols whose probability of occurrence diminishes for $\mu \gg 1$.

\textbf{Remark:} There are two extreme cases for packetization interval length, $T$. In the \textit{slotted mode}, when packetization interval is chosen very small, $(\mu \rightarrow 0)$, then approximation $(e^{-\mu}\approx 1-\mu \approx 1)$ implies $\mathbb{E}[\mathbf{s}_n]\approx \frac{(H+N)}{R\alpha^{H+N}}$. 
In this case, the impact of $H$ on the service time is as large as $N$, since most packets include only one symbol and their lengths are $N+H$. On the other hand, for extremely large packetization interval $(\mu \rightarrow \infty)$ and finite $N$ and $H$, we have $N \mu \gg H$. After some mathematical manipulation we have $\mathbb{E}[\mathbf{s}_n] \approx \frac{N \mu}{R \alpha^{N+H}}e^{-\mu (1-\alpha^{-N})}$. In this case, each packet includes a large number of symbols and the impact of header size on the service time is negligible. In particular, for error-free channel, we have $\alpha=1-\beta=1$, hence service time is reduced to $\frac{N \mu}{R}$, which grows linearly with $T$. This is shown in simulation results presented in section \ref{sec:simulation}.

\subsection{Packet formation delay} \label{sec:packet-formation-delay}
To account for the packet formation delay in \textit{efficient mode}, we recall Poisson arrivals of symbols during a framing interval $T$, as depicted in Fig.~\ref{fig:framingTB}. If $\mathbf{k}_n$, $\mathbf{a}_n$ and $\bm{\tau}_n$ are the number of containing symbols, arrival time, and inter-arrival time of the packets, then for all containing symbols $\{S_{n_i} | \bm{a}_{n-1} \leq \mathbf{t}_{n_i} < \bm{a}_{n} \}$, we define the packet formation delay $\mathbf{f}_{n_i}$ as the time span between the symbol arrival epoch $\mathbf{t}_{n_i}$ and the packet formation epoch $\mathbf{a_n}$. The expected value of the average packet formation delay in any packetization interval is:
\begin{align} \label{eq:f1}
\mathbb{E}[\mathbf{f}_{n_i}]= \mathbb{E}_{\mathbf{k}_{n_i}}\big[\mathbb{E}[\mathbf{f}_{n_i}|\mathbf{k}_n~\text{arrivals}]\big]
\end{align}

Noting the memory-less property of exponential distribution, if we choose a packetization window of length $T$ in a random location such that the window encompasses $k_n$ symbols, then the packetization interval includes $k_n+1$ symbol inter-arrival times $\theta_{n_i}$, two of which ($\theta_{n_1}$ and $\theta_{n_k+1}$) are truncated at both ends as depicted in Fig. \ref{fig:interval}. 
Therefore, the expected value of average packet formation delay can be calculated as follows,
\begin{align} \label{eq:f2}
\nonumber
\mathbb{E}[\bar{\mathbf{f}}|\mathbf{k}_n=k] &=  \mathbb{E}\big[ \frac{1}{k} \sum_{i=n_1}^{n_k}  \mathbf{f}_i\big]\\
\nonumber
& =  \mathbb{E}\big[ \frac{1}{k} \sum_{i=n_1}^{n_k} \big(  \frac{1}{2} \theta_{n_k+1} + \sum_{j=i+1}^{n_k} \theta_i \big) \big]\\
\nonumber
&=  \frac{1}{k} \sum_{i=n_1}^{n_k} \big(  \frac{1}{2} \mathbb{E}[\theta_{n_k+1}] + \sum_{j=i+1}^{n_k} \mathbb{E}[\theta_i] \big) \\
\nonumber
&=\frac{1}{k}\big(k\frac{1}{2}+ \frac{k(k-1)}{2}  \big) \frac{1}{\lambda}= \frac{k}{2\lambda} \\
&\Longrightarrow \mathbb{E}[\bar{\mathbf{f}}]=\frac{1}{2\lambda}\mathbb{E}[\mathbf{k}_n] = \frac{1}{2\lambda}.{\lambda T}=\frac{T}{2}
\end{align}
This is consistent with the memoryless and i.i.d properties of inter-arrival distributions. Since it allows us to interchangeably calculate expected distance from either sides of the frame, which in turn implies: $\mathbb{E}[\sum_{i=n_1}^{n_1+k-1} \mathbf{f}_i]  = \mathbb{E}[\sum_{i=n_1}^{n_1+k-1} (\mathbf{t}_i-\mathbf{a}_{n-1})] = kT-\mathbb{E}[\sum_{i=n_1}^{n_1+k-1} \mathbf{f}_i]  \Longrightarrow \mathbb{E}[\sum _{i=n_1}^{n_1+k-1}\mathbf{f}_i]=kT/2\Longrightarrow \mathbb{E}[\mathbf{f}_i]=T/2$ \cite{razi-ciss-packet, stoch-ross-book}.
Consequently, the packet formation delay is linearly proportional to packetization interval as was expected.

\begin{figure}[h]
\centering
\includegraphics[width=1\columnwidth]{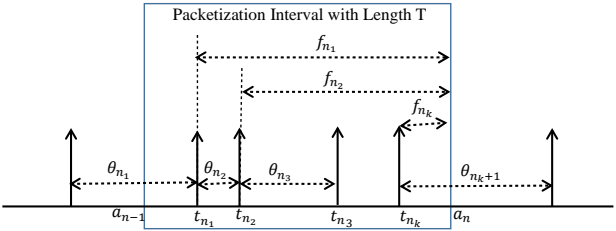}  
\caption{Packetization interval: packet formation delay $\mathbf{f}_{n_i}$ is defined as the time span from the symbol arrival time to the end of current packetization interval $\mathbf{a}_n$.}
\label{fig:interval}
\end{figure}

\subsection{Queuing dynamics} \label{sec:analysis}

For the $n^{th}$ transmit packet $X_n$, the arrival, service and waiting times are presented by $\mathbf{a}_n$, $\mathbf{s}_n$, and $\mathbf{w}_n$, respectively. The inter-arrival time is also denoted by $\bm{\tau}_n=\mathbf{a}_{n}-\mathbf{a}_{n-1}$, as depicted in Fig. \ref{fig:framingTB}. The waiting time for the $n^{th}$packet $\mathbf{w}_{n}$ is a non-negative value and can not exceed the difference between the current inter-arrival time $(\bm{\tau}_{n})$ and the sojourn time of the previous packet $(\mathbf{w}_{n-1} + \mathbf{s}_{n-1})$, hence the following well celebrated Lindley's equation holds \cite{Queu_Book2}:
\begin{align} \label{eq:w1}
\mathbf{w}_{n}= [\mathbf{w}_{n-1} + \mathbf{s}_{n-1} - \bm{\tau}_{n}]^+,
\end{align}
with initial condition $\mathbf{w}_0=0$ and $(x)^+=\max(0,x)$. We already showed in (\ref{eq:plen}) that the packet length and packet inter-arrival times are independent, which implies the independence of service time $\mathbf{s}_n$ and waiting time $\mathbf{\bm{\tau}}_n$. Therefore, we have
\begin{align}
\nonumber
\mathbb{P}(\mathbf{w}_{n}&=w, \bm{\tau}_n \leq t | \mathbf{w}_0,\mathbf{w}_1,...,\mathbf{w}_{n-1},\bm{\tau}_0,\bm{\tau}_1,...\bm{\tau}_{n-1})\\
& =\mathbb{P}(\mathbf{w}_{n}=w, \mathbf{a}_{n+1}-\mathbf{a}_n \leq t | \mathbf{w}_{n-1}).
\end{align}
This means that the process $\{\mathbf{w},\bm{\tau} \} \triangleq \{\mathbf{w}_n,\bm{\tau}_n\}$ forms a renewal Markov Process. To find the transition kernel for the embedded Markov chain $\{\mathbf{w}_n\}$, we note
\begin{align} \label{eq:kernel1}
\nonumber
&\mathbb{P}(\mathbf{w}_{n}= w_{n}| \mathbf{w}_{n-1}=w_{n-1}, \bm{\tau}_n = \tau) =  f_{\mathbf{s}}(w_{n}-w_{n-1}+\tau)u(w_{n})\\
& + \sum_{\substack{s \in \mathcal{S}\\ 0 \leq s \leq \max(0,\mathbf{\tau}-w_n)}} f_{\mathbf{s}}(s)\delta (w_{n}),
\end{align}
where $u(.)$ and $\delta(.)$ are step and Dirac impulse functions and $f_{\mathbf{s}}(s)$ is pmf of serive time, ${\mathbf{s}_n}$ defined in (\ref{eq:ps1}). The first term in (\ref{eq:kernel1}) accounts for the case where packet $n$ arrives before packet $n-1$ service is completed (i.e. $\mathbf{w}_{n-1} + \mathbf{s}_{n-1} - \bm{\tau}_{n} > 0$). Therefore, $\mathbf{w}_{n}=\mathbf{w}_{n-1} + \mathbf{s}_{n-1} - \bm{\tau}_{n}$ is a positive value which occurs for $s_{n-1}=w_{n}-w_{n-1}+\tau$. The second term is corresponding to the case where the packet $n$ arrives after completeion of packet $n$ service and we have $0 \leq \mathbf{s}_{n-1} \leq \max(0,\mathbf{\tau}-w_{n-1})$. In this case, packet $n$ does not experience any waiting and  $\mathbf{w}_{n}=\max(0,\mathbf{w}_{n-1}+\mathbf{s}_{n-1}-\bm{\tau}_{n})$ is mapped to $0$.
Noting that $\bm{\tau}_n$ follows Geometric distribution, (\ref{eq:kernel1}) yields the following transition kernel:

\begin{align} \label{eq:kernel2}
\nonumber
&\mathbb{P}(\mathbf{w}_{n}=w_n| \mathbf{w}_{n-1}= w_{n-1}) \\
\nonumber
&=  \sum_{m=1}^{\infty}\mathbb{P}(\mathbf{w}_{n}=w_n| \mathbf{w}_{n-1}= w_{n-1}, \bm{\tau_n} = m T) \mathbb{P}(\bm{\tau_n} = m T)\\
\nonumber
&= \sum_{m=1}^{\infty}  (1-e^{-\mu})e^{-(m-1) \mu} \Big [   f_{\mathbf{s}}(w_{n}-w_{n-1}+mT)u(w_{n}) \\
&+ \sum_{\substack{s \in \mathcal{S}\\ 0 \leq s \leq \max(0,mT-w_n)}} f_{\mathbf{s}}(s)\delta (w_{n}) \Big ],
\end{align}
This transition kernel is well defined for a given $f_{\mathbf{s}}(s)$ and is used in Monte Carlo simulations in section \ref{sec:simulation}.

\subsection{Stability condition}  \label{sec:stability}
The transition kernel in (\ref{eq:kernel2}) states the evolution of queuing delays until it reaches the stationary state. 
It has been shown that for such a system, the stability is granted if the embedded Markov process complies the sufficient condition of ergodicity defined in \cite{Tweedie, Lindley} as follows
\begin{align} \label{eq:stab1}
\mathbb{E}[\mathbf{s}_n-\bm{\tau}_n]<0.
\end{align}

Therefore, $T$ must be chosen such that $\mathbb{E}[s]<\mathbb{E}[\tau_i]$ is satisfied. Substituting (\ref{eq:ET},\ref{eq:Es1}) in (\ref{eq:stab1}), we have the following stability conditions: 
\begin{align} \label{eq:stab2}
\nonumber
&\frac{  (H + N \mu{\alpha}^{-N}) e^{ -\lambda T (1-\alpha^{-N}) } -H e^{-\lambda T} }{{(1-e^{-\lambda T})R\alpha^{H}}} \leq \frac{T}{1-e^{-\lambda T}}\\
&\Longleftrightarrow R \ge \frac{e^{-\lambda T}\big [  (H + N \lambda T{\alpha}^{-N}) e^{\lambda T \alpha^{-N} } -H  \big] }{\alpha^{H} T}
\end{align}
Equation (\ref{eq:stab2}) provides a closed-form expression for minimum channel rate with stable queue, if the packetization interval $T$ is fixed. It also provides a lower bound on $T$, for a given channel rate, which can be solved numerically or evaluated approximately. For error free channel $\alpha=1$, this simplifies to
\begin{align} \label{eq:stab3}
R \geq  \frac{(H + N \lambda T)  -H e^{-\lambda T} }{T}.
\end{align}
To find the necessary conditions on channel rate $R$, we consider two extreme cases of short and long packetization intervals. If $T\rightarrow \infty$, (\ref{eq:stab3}) yields: $R \geq N \lambda$, which is the rate capable of handling long packets with negligible overhead bits. The other extreme case occurs when $T \rightarrow 0$, where each symbol forms a packet of length $H+N$ upon arrival. In this case, we have $e^{-\mu} \approx 1- \mu$ and (\ref{eq:stab3}) is reduced to $R \geq (N+H) \lambda$. If $ R \in [N \lambda , (N+H) \lambda ]$. There exists a subset of $T$ such that the queue is stable and we can obtain arbitrary near optimal throughput of $\rho = \frac{\mathbb{E}[s]}{\mathbb{E}[\tau]} \rightarrow 1^-$. For $R \leq N \lambda$, there is no $T$ that satisfies the stability conditions and for $R \geq (N+H) \lambda$, all $T$ values produce a stable queue.

\subsection{Expected waiting time}
Equation (\ref{eq:stab1}) ensures that Markov chain $\{\mathbf{w}_n\}$ is recurrent and positive or equivalently the Markov process $\{\mathbf{w}_n, \bm{\tau}_n\}$ is regenerative.
If we set $\mathbf{v}_0=0$ and $\mathbf{v}_j = \sum\limits_{i=n-j+1}^{n}(\mathbf{s}_i-\bm{\tau}_i)$, by recurrence, we can rewrite (\ref{eq:w1}) as follows \cite{Queu_Book, Queu_Book2}
\begin{align} \label{eq:w2}
\mathbf{w}_{n+1}= \max(\mathbf{v}_0,\mathbf{v}_1,...,\mathbf{v}_n).
\end{align}

One may interpret $\mathbf{w}_n$ in (\ref{eq:w1}) as the maximum of accumulated backward steps for a random walk process back to an arbitrary time between $0$ and $n$.
For a stable queue satisfying (\ref{eq:stab1}), we have $\lim_{n \rightarrow \infty} \mathbb{E}[\mathbf{v}_n]=n\mathbb{E}[\mathbf{s}-\bm{\tau}] < \infty $. Hence, $\mathbf{w}_n$ tends to a r.v. $\mathbf{w}=\underset{i}{\sup} ~\mathbf{v}_i$, as $n$ approaches infinity. Therefore, (\ref{eq:w1}) yields
\begin{align} \label{eq:w3}
\mathbb{E}[\mathbf{w}]&= \mathbb{E}[(w+s - \tau)^+],\\
\sigma^2[\mathbf{w}]&= \sigma^2[(w+s-\tau)^+].
\end{align}

The transition kernel in (\ref{eq:kernel2}) does not provide a closed form equation for expected waiting time, $\mathbb{E}[\mathbf{w}]$, but sampling methods such as Markov-Chain Monte Carlo (MCMC) can be used for numerical evaluation of $\mathbb{E}[\mathbf{w}]$ for given parameters. Here, we use the well-celebrated approximate method of Kingman's formula for G/G/1 queues to obtain the closed-form expression for $\mathbb{E}[\mathbf{w}]$ \cite{Kingman}. Noting the inter-arrival time distribution in (\ref{eq:ET}), we have
\begin{align} \label{eq:w4}
\nonumber
\mathbb{E}[\mathbf{w}] &\approx \frac{\rho \mathbb{E}[\mathbf{s}](C^2[\mathbf{s}]+C^2[{\bm{\tau}}])}{2(1-\rho)} \\
& \approx \frac{\big(\mathbb{E}[\mathbf{s}]\big)^2(1-e^{-\mu})}{T-\mathbb{E}[\mathbf{s}](1-e^{-\mu})} .\frac{C[\mathbf{s}]^2+C[\bm{\tau}]^2}{2} 
\end{align}
where $C[x]=\sigma[x]/\mathbb{E}[x]$ is the coefficient of variation of r.v. $x$ and $\rho$ is the utilization factor of queue defined as $\rho= \mathbb{E}[\mathbf{s}]/\mathbb{E}[\bm{\tau}]= \mathbb{E}[\mathbf{s}](1-e^{-\mu})/T$. 

Substituting (\ref{eq:Es1}, \ref{eq:Es2}) in (\ref{eq:w4}) and noting that $C^2[\bm{\tau}]=(\sigma[\bm{\tau}]/\mathbb{E}[\bm{\tau}])^2=e^{-\lambda T}$, equation (\ref{eq:w4}) provides a closed form equation for waiting time. The simulation results provided in section \ref{sec:simulation} confirm the accuracy of the proposed analysis.
 
\section{Optimal Packetization Interval} \label{sec:optimal}
In stationary situations, as mentioned in section \ref{sec:t}, $\mathbf{w}_i \rightarrow \mathbf{w}$. Noting stationary property of $\mathbf{s}_i$ and $\mathbf{f}_i$, the overall delay term $\mathbf{d}_i$ also approaches a r.v. $\mathbf{d}=\mathbf{s}+\mathbf{w}+\mathbf{f}$.
Substituting (\ref{eq:w3}, \ref{eq:f2}) in (\ref{eq:d2}), results in
\ifx \mycol \coltwo
\begin{align} \label{eq:Ed3}
\nonumber
\mathbb{E}&[d]=\underset{t\rightarrow \infty}{\lim} \frac{1}{M(t)}\sum_{i=1}^{M(t)} \mathbb{E}[d_i]= \mathbb{E}[d_i]\\
\nonumber
&=\mathbb{E}[w_i]+\mathbb{E}[s_i]+\mathbb{E}[f_i] \\
& \approx \frac{\big(\mathbb{E}[s]\big)^2(1-e^{-\mu})}{T-\mathbb{E}[s](1-e^{-\mu})} .\frac{C_s^2+C_{\tau}^2}{2} + \mathbb{E}[s] + T/2.
\end{align}
\else
\begin{align} \label{eq:Ed3}
\nonumber
\mathbb{E}[d]&=\mathbb{E}[d_i] =\mathbb{E}[w_i]+\mathbb{E}[s_i]+\mathbb{E}[f_i] \\
& \approx \frac{\big(\mathbb{E}[s]\big)^2(1-e^{-\mu})}{T-\mathbb{E}[s](1-e^{-\mu})} .\frac{C_s^2+C_{\tau}^2}{2} + \mathbb{E}[s] + T/2.
\end{align}
\fi
Substituting moments of service time and inter-arrival times defined in (\ref{eq:ET},  \ref{eq:Es1},\ref{eq:Es2}) in (\ref{eq:Ed3}) provides a closed form expression for the end-to-end delay in terms of $(N,H,\beta,T)$. 
The convexity of (\ref{eq:Ed3}) with respect to $T$ can easily be verified by checking positivity of the second derivative, which is straightforward but involves many terms. This was also confirmed by numerical evaluation as depicted in Fig.~\ref{fig:delay_per}. Equating derivative of (\ref{eq:Ed3}) with respect to $T$ to zero $(\frac{\partial \mathbb{E}[\mathbf{d}]}{\partial T}=0)$ provides the optimum packetization interval $T^\ast$, which minimizes the end-to-end latency $\mathbb{E}[\mathbf{d}]$. Note that convexity is not a necessary requirement, since the closed form expression enables finding the global minimum using numerical methods such as gradient descent algorithm. We discuss some special cases next.

\subsection{Error-free channel}
The expression in (\ref{eq:Ed3}) is a complex expression in general. 
However, for some reasonable assumptions, it can be further simplified. For instance, we analyze the system for an almost error-free channel, where $\alpha=1-\beta \rightarrow 1$. In order to analyze the delay variations, we consider two extreme cases for $\mu$. For $\mu=\lambda T \gg 1$, we use the approximations $e^{-\mu}\approx 0$ and $1-e^{-\mu}\approx 1$. In this case, we can use approximation $\alpha^{-kN} =\frac{1}{\alpha^{kN}} \approx \frac{1}{1-kN(1-\alpha)} \approx 1+kN\beta$, which arises from Taylor expansions of $\alpha^{-kN}$ around $\alpha=1$. Using these approximations and keeping dominant terms, the moments of service time derived in (\ref{eq:Es1},\ref{eq:Es2}) for $(\alpha \rightarrow 1, \mu \rightarrow \infty)$ are simplified to:
\begin{align} 
\label{eq:Es_app1}
\nonumber
\mathbb{E}[\mathbf{s}] &=\frac{Ne^{-\mu}}{(1-e^{-\mu})R\alpha^{H}} \Big [  (\eta + \mu{\alpha}^{-N}) e^{ \mu \alpha^{-N} } -\eta  \Big] \\
\nonumber
&\approx \frac{N\mu e^{-\mu(1-\alpha^{-N})}}{R\alpha^{H+N}}\approx \frac{N\mu e^{\mu N \beta}}{R\alpha^{H+N}},\\
\nonumber
\mathbb{E}[\mathbf{s}^2]&\approx \frac{N^2 e^{-\mu}}{(1-e^{-\mu}) R^2 \alpha^H}   \big (2 \mu^2 \alpha^{-4N} \big) \alpha^{-H} e^{\mu \alpha^{-2N}}   \\
& \approx \frac{2 \mu^2 N^2 e^{-\mu(1- \alpha^{-2N})}}{ R^2 \alpha^{2H+4N}}  \approx \frac{2 \mu^2 N^2 e^{2\mu N \beta }}{ R^2 \alpha^{2H+4N}}      \\
\nonumber
\Longrightarrow &C^2[\mathbf{s}]=\frac{\mathbb{E}[\mathbf{s}^2]-\big(\mathbb{E}[\mathbf{s}]\big)^2}{\big(\mathbb{E}[\mathbf{s}]\big)^2} 
\approx 2 \alpha^{-2N}-1 \approx 1+4N\beta
\end{align}

Obviously, $\mathbb{E}[\mathbf{s}]$ and $\mathbb{E}[\mathbf{s}^2]$ are convex increasing functions of $\mu$.
Substituting the above approximations  in (\ref{eq:Ed3}) and noting $C[\bm{\tau}^2]\approx P_0\approx 0$ for $\mu \gg 1$, we obtain the following closed form expression for $\mathbb{E}[\mathbf{d}]$:
 \begin{align} \label{eq:Ed4} 
\nonumber
\mathbb{E}[\mathbf{d}]&\approx \frac{\big(\mathbb{E}[\mathbf{s}]\big)^2 C^2[\mathbf{s}]}{2(T-\mathbb{E}[\mathbf{s}])} + \mathbb{E}[\mathbf{s}] + T/2 \\
&\approx  \frac{N\mu e^{\mu N \beta}}{R\alpha^{H+N}} \big[ \frac{(1+4N\beta) N\mu e^{\mu N \beta}}{2TR\alpha^{H+N}-2N\mu e^{\mu N \beta}} +1\big]+T/2 
\end{align}
which is also a convex and increasing function of $\mu$ by simple inspection, since $f(g(x))=\frac{g(x)}{T-g(x)}$ is an increasing convex function of $x$ for positive increasing convex function $g(x)$, provided that the denominator remains positive \cite{boyd-opt}. The positivity of the denominator is ensured by stability conditions of the queue. In this case, $\mathbb{E}[\mathbf{d}]$ grows exponentially with $\mu N \beta$.

For the other extreme case of $\mu=\lambda T \rightarrow 0$ for error free channel ($\alpha \rightarrow 1$), we can use approximation $e^{-\lambda T} \approx 1-\lambda T \approx 1$ and $e^{\lambda T \alpha^{-kN}} \approx 1+\lambda T \alpha^{-kN} \approx 1+\lambda T(1+k N\beta)$. Hence, the moments of $\mathbf{s}$ after removing second order terms of $\mu$ and $\beta$ become
\begin{align} 
\label{eq:Es_app2}
\nonumber
\mathbb{E}[\mathbf{s}] &\approx \frac{(N+H)e^{-\mu}}{ R\alpha^{H}}, ~~~\mathbb{E}[\mathbf{s}^2] \approx  (H+N)^2\frac{e^{-\mu}}{R^2\alpha^{H}}\\
&\Longrightarrow C^2[\mathbf{s}] \approx  e^{\mu}\alpha^{H}-1\longrightarrow 0  \end{align}

This result is consistent with the fact that for extremely small $\lambda T$, the frames include only one packet of length  $(N+H)$ and noting $\alpha \rightarrow 1$, we expect $\mathbb{E}[\mathbf{s}^k]\rightarrow (\frac{N+H}{R})^k$ with negligible variations. Moreover, the inter-arrival time between packets tends to the inter-arrival time between the input symbols and follows an exponential distribution, which is simply verified by $\mathbb{E}[\bm{\tau}]=\frac{T}{1-e^{-\lambda T}}\approx \frac{1}{\lambda}$ in (\ref{eq:ET}). Similarly, (\ref{eq:ET}) yields $C^2[\bm{\tau}] = \frac{\mathbb{E}[\bm{\tau}]^2]}{\mathbb{E}[\bm{\tau}]]^2}-1=e^{-\lambda T}\approx 1$.
Substituting the aforementioned approximations along with the approximation $\mathbb{E}[\mathbf{f}]=\frac{T}{2}(1+e^{-\lambda T}) \approx T$ for $\mu \rightarrow 0$, we obtain the following simplified expression for $\mathbb{E}[\mathbf{d}]$:
\begin{align} \label{eq:Ed-small-T}
\nonumber
\mathbb{E}[\mathbf{d}]&\approx \mathbb{E}[\mathbf{s}] [\frac{\lambda \mathbb{E}[\mathbf{s}]}{1-\lambda \mathbb{E}[\mathbf{s}]}\frac{e^{-\lambda T}}{2}+1]+T/2\\
&\approx  \frac{(N+H)e^{-\mu}}{ R\alpha^{H}}[\frac{(N+H)e^{-\mu}}{2(R\alpha^{H}-\lambda(N+H)e^{-\mu})}+1]+T 
\end{align}
In this case, $\mathbb{E}[\mathbf{d}]$ is also a convex function of $\mu=\lambda T$ by similar argument. These approximations can be used as an accurate estimate of expected value of end-to-end delay for two extreme cases. It also verifies the convexity of $\mathbb{E}[\mathbf{d}]$ with respect to $T$ in general as confirmed by simulations results in section \ref{sec:simulation}. The accuracy of these approximations for extreme cases of extremely large and small $T$ is shown in Fig.\ref{fig:approx}.

\begin{figure}[h]
\centering
\includegraphics[width=1\columnwidth]{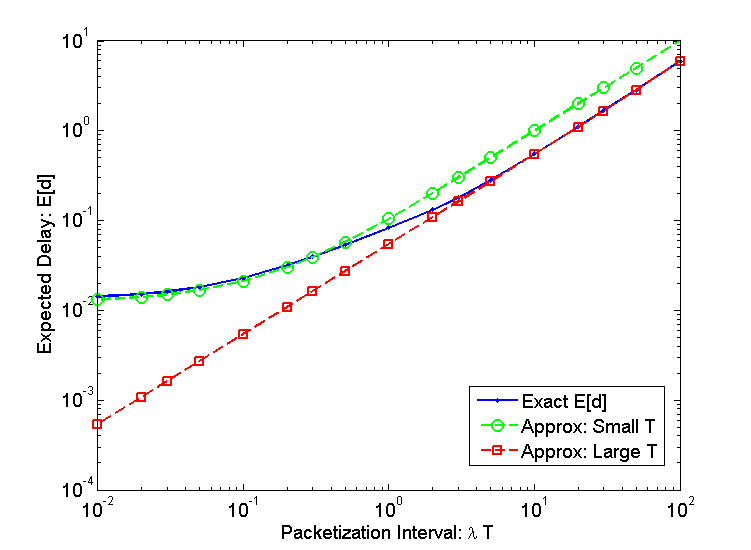} 
\caption{Expected delay $\mathbb{E}[d]$ vs packetization interval $\lambda T$: approximations for extremely large and small packetization intervals $(N=8, H = 16, \lambda = 10, \beta=10^{-3}, R=2\times 10^3)$.}
\label{fig:approx}
\end{figure}

The approximation in (\ref{eq:Ed-small-T}) is based on the assumption that the denominator remains positive for arbitrary low interval. This condition holds for $R$ large enough that satisfies stability condition $R>(N+H)\lambda$ as stated in section \ref{sec:stability}. 
However, for $R<(N+H)\lambda$, we are not able to make $\lambda T$ arbitrary small and $\mathbb{E}[d]$ approaches infinity when $\rho \rightarrow 1^{-}$. In this case $\mathbb{E}[d]$ has a cup shape as depicted in Figs. \ref{fig:delay_sim} and \ref{fig:delay_per}.

Due to the convexity of $\mathbb{E}[\mathbf{d}]$ with respect to $T$, there exists a global minimum corresponding to an optimal packetization interval length with minimum end-to-end latency. We note that the convexity is not necessary and the weaker condition of quasi-convexity is sufficient to ensure uniqueness of the global minimum, which in this case can be found using numerical methods (e.g. gradient descent algorithm).

\subsection{Energy efficiency}
In the above formulations, the optimization is aimed at minimizing the expected delay of symbols. However, one may be interested in maximizing other performance metrics under certain average delay constraints (i.e. $\mathbb{E}[d] \leq D_0$ ). 
Since the derived closed-form expression for end-to-end latency is convex with respect to $T$, optimizing other parameters can be recast as a convex optimization problem given that the objective function is convex with respect to $T$. 

An important parameter that recently regained attention is energy efficiency due to its crucial role in the network maintenance cost and nodes operational lifetime \cite{efa2}. 
We consider a scenario that one is interested in maximizing energy efficiency by choosing appropriate packetization time, when a certain expected average delay is tolerable. In this work, we follow the popular definition of energy efficiency as the ratio of throughput to transmit power which is equivalent to the number of bits transmitted per unit energy in $bit/Joule$~\cite{EE2,EE4,efa2}. In order to cast this objective as a standard convex problem, we formulate it as the equivalent problem of minimizing the energy consumption Rating (ECR) defined as expected energy used for one information bit transmission \cite{EE1}. In the proposed model, $\mathbf{r}(\mathbf{k}N+H)$ bits are sent over the channel for $\mathbf{k}N$ information bits, where the number of symbols in packet, $\mathbf{k}$ and the number of re-transmissions, $\mathbf{r}$ are random variables. Considering channel rate $R$ and transmit power $P_t$, and using the ergodicity of the queue under stability conditions, the ECR is simply $\mathbb{E}[\frac{\mathbf{r}(\mathbf{k}N+H)}{R P_t\mathbf{k}N}]$.
Here, $\mathbf{k}$ is a $T$-dependent random variable with Poisson-like distribution excluding zeros: $\mathbb{P}(\mathbf{k}=k)=\frac{1}{1-e^{-\lambda T}}e^{-\lambda T} (\lambda T)^k/k!$ for $k=1,2,\dots$. Likewise, $\mathbf{r}$ is Geometrically distributed with success parameter $\alpha^{H+\mathbf{k}N}$ and hence depends on $T$ through $k$. To achieve maximum energy efficiency, we desire to minimize ECR($T$). Let us state the following lemma.

\begin{lemma} \label{lem:Ef}
If $\mathbf{k}$ is a Poisson r.v. after zero deletion (i.e.  $\mathbb{P}(\mathbf{k}=k)=\frac{1}{1-e^{-\mu}}e^{-\mu} \mu^k/k!$), then we have
\begin{align} \label{eq:Eflemma}
\mathbb{E}_\mathbf{k}[\frac{1}{\mathbf{k} \xi^\mathbf{k}}]=\frac{\mathcal{E}(\mu/\xi)- \log(\mu/\xi)-\gamma}{e^\mu-1}
\end{align}
where $\log(.)$ is the natural logarithm, $\gamma$ is the \textit{Euler constant} and $\mathcal{E}(.)$ is the \textit{exponential integral} defined in \textit{Cauchy principal value} form as follows:
\begin{align} \label{eq:Ei}
\mathcal{E}(x)=\int_{-\infty}^{x} \frac{e^{t}}{t}\text{dt}
\end{align}
\end{lemma}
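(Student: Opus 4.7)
The plan is to write out the expectation directly, recognize the resulting power series as the series expansion of the exponential integral, and then collect terms. So first I would write
\begin{align*}
\mathbb{E}_{\mathbf{k}}\!\left[\frac{1}{\mathbf{k}\,\xi^{\mathbf{k}}}\right]
&= \sum_{k=1}^{\infty}\frac{1}{k\,\xi^{k}}\cdot\frac{e^{-\mu}\mu^{k}}{(1-e^{-\mu})\,k!} \\
&= \frac{e^{-\mu}}{1-e^{-\mu}}\sum_{k=1}^{\infty}\frac{(\mu/\xi)^{k}}{k\cdot k!}.
\end{align*}
This reduces the problem to evaluating the single-variable power series $S(x):=\sum_{k=1}^{\infty} x^{k}/(k\cdot k!)$ at $x=\mu/\xi$.

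Next I would identify $S(x)$ with the exponential integral. Since the series has infinite radius of convergence, term-by-term differentiation is valid, giving $S'(x)=\sum_{k=1}^{\infty} x^{k-1}/k! = (e^{x}-1)/x$. Because $S(0)=0$, I can integrate back:
\[
S(x)=\int_{0}^{x}\frac{e^{t}-1}{t}\,dt.
\]
The two pieces $\int e^{t}/t\,dt$ and $\int 1/t\,dt$ are individually divergent at $t=0$, but the principal-value definition of $\mathcal{E}$ allows me to separate them and invoke the classical identity $\lim_{x\to 0^{+}}\bigl[\mathcal{E}(x)-\log x\bigr]=\gamma$, yielding $S(x)=\mathcal{E}(x)-\log x-\gamma$ for $x>0$.

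Finally I would substitute this back and multiply numerator and denominator of the prefactor by $e^{\mu}$ to arrive at
\[
\mathbb{E}_{\mathbf{k}}\!\left[\frac{1}{\mathbf{k}\,\xi^{\mathbf{k}}}\right]
=\frac{\mathcal{E}(\mu/\xi)-\log(\mu/\xi)-\gamma}{e^{\mu}-1},
\]
which is the claim.

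The only nontrivial step is the second one: handling the logarithmic singularity at the origin. Individually, $\mathcal{E}(x)$ and $\log x$ both blow up as $x\to 0^{+}$, so I cannot split the integrand naively; I must keep them grouped and rely on the classical limit that identifies their difference with $\gamma$. Everything else is routine power-series manipulation and a substitution $x=\mu/\xi$.
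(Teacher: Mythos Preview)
Your proof is correct and follows essentially the same route as the paper: write the expectation as a power series in $x=\mu/\xi$, differentiate to collapse the $1/k$ factor, integrate back, and fix the constant via the classical limit $\lim_{x\to 0^{+}}[\mathcal{E}(x)-\log x]=\gamma$. The only difference is cosmetic: the paper carries an extra $e^{-x}$ factor into the auxiliary function, which turns the direct integration you perform into a first-order linear ODE solved by an integrating factor---your version is slightly more streamlined but not a different idea.
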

\begin{proof}
See Appendix.
\end{proof}

Now we proceed with evaluating $E_f$ in terms of $T$ as follows:
\begin{align} \label{eq:Ef}
\nonumber
RP_t&\text{ECR}(T)  =\mathbb{E}_\mathbf{k}\big[\mathbb{E}_\mathbf{r}[\frac{\mathbf{r}(\mathbf{k}N+H)}{\mathbf{k}N}]\big]\\
\nonumber
&= \mathbb{E}_\mathbf{k}[\alpha^{-(\mathbf{k}N+H)} .\frac{\mathbf{k}N+H}{\mathbf{k}N}]\\
\nonumber
&= \frac{1}{\alpha^H}\mathbb{E}_\mathbf{k}[\alpha^{-\mathbf{k}N}] + \frac{H}{N\alpha^H}\mathbb{E}_\mathbf{k}[\frac{\alpha^{-\mathbf{k}N}}{\mathbf{k}}]\\
\nonumber
&\overset{(a)}=\frac{e^{-\mu}}{\alpha^H(1-e^{-\mu})}(e^{\mu\alpha^{-N}}-1) + \frac{H}{N\alpha^H}\mathbb{E}_\mathbf{k}[\frac{\alpha^{-\mathbf{k}N}}{\mathbf{k}}]\\
\nonumber
&\stackrel{(b)}=\frac{e^{-\mu}}{\alpha^H(1-e^{-\mu})}(e^{\mu\alpha^{-N}}-1) \\
\nonumber
&~~~+ \frac{H}{N\alpha^H (e^{\mu}-1)}\big(\mathcal{E}(\mu \alpha^{-N}) - \log(\mu \alpha^{-N})\big)\\
&=\frac{e^{\mu\alpha^{-N}}-1 + \eta \big(\mathcal{E}(\mu \alpha^{-N}) - \log(\mu \alpha^{-N})-\gamma \big)}{\alpha^H(e^{\mu}-1)}
\end{align}
where $(a)$  and $(b)$ are respectively due to lemmas (\ref{lem:2}) and (\ref{lem:Ef}). If no header bits are used in the system $\eta=H/N=0$, then (\ref{eq:Ef}) simplifies to
\begin{align} \label{eq:Ef2}
\text{ECR}(T)=\frac{e^{\mu\alpha^{-N}}-1}{e^{\mu}-1}RP_t
\end{align}

The average energy required per unit bit transmission in (\ref{eq:Ef}) provides an explicit relation between the packetization interval and the energy efficiency. This relation is depicted for a typical system parameter set in Fig. \ref{fig:EF}. To achieve maximal energy efficiency under average delay constraint, the formal problem formulations becomes:
\begin{align}
\nonumber
&
\begin{cases}
&T^{\ast}=\underset{T}{\text{arg min} }\{\text{ECR}(T)\}\\
&\text{s.t.} ~~\mathbb{E}[\mathbf{d}(T)] < D_0
\end{cases}
 \Longrightarrow  \\
&
\begin{cases}
&T^{\ast}={\text{arg min}} \{  \frac{e^{\mu\alpha^{-N}}-1 + \eta \big(E_i(\mu \alpha^{-N}) - \log(\mu \alpha^{-N})-\gamma \big)}{\alpha^H(e^{\mu}-1)}\}\\
&\text{s.t.} ~~\frac{\big(\mathbb{E}[\mathbf{s}]\big)^2(1-e^{-\mu})}{T-\mathbb{E}[\mathbf{s}](1-e^{-\mu})} .\frac{C^2[\mathbf{s}]+C^2[\bm{\tau}]}{2} + \mathbb{E}[s] + T/2< D_0
\end{cases}
\end{align}
which can be readily solved using optimization packages such as CVX \cite{boyd-opt}. Since the optimization parameter is only packetization time $T$, a short-cut solution is to check K.K.T conditions, namely choosing the best point among three points including the global minimum of ECR($T$) and two solutions for $\mathbb{E}[\mathbf{d}]=D_0$. 
If the delay constraint is too loose, the global minimum of ECR($T$) lays in the valid range of $\mathbb{E}[\mathbf{d}]<D_0$ and the maximal energy efficiency is achieved. Otherwise, one of the extreme corner points that are obtained by solving the constraint equation ($\mathbb{E}[\mathbf{d}]=D_0$) defines the optimal feasible energy efficient point. 
Both scenarios are depicted in Fig. \ref{fig:EF} within section \ref{sec:simulation}.

\section {Simulation results} \label{sec:simulation}
\begin{figure}[b]
\centering
\includegraphics[width=1\columnwidth]{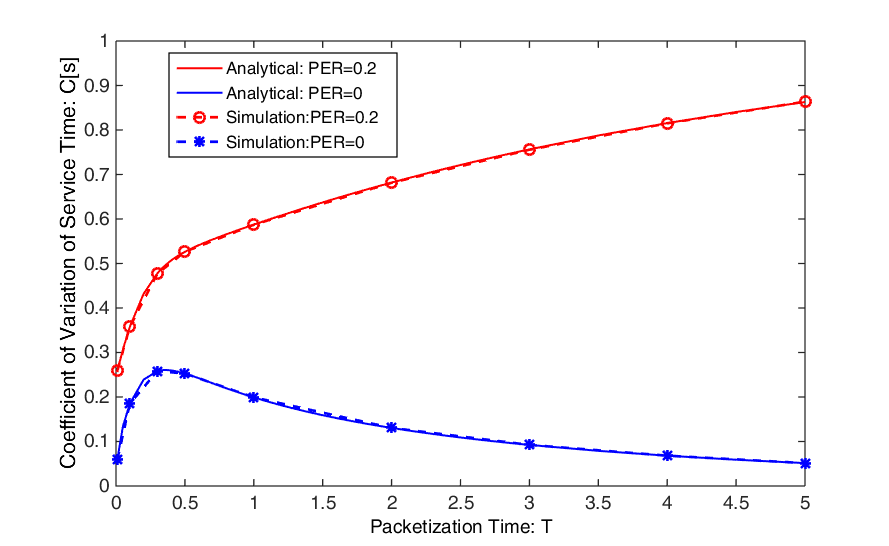}  
\caption{Coefficient of variation of service time $C[\mathbf{s}]=\sigma[\mathbf{s}]/\mathbb{E}[\mathbf{s}]$: comparison between simulations and analysis $(N=16, H = 40, \lambda = 10)$.}
\label{fig:Ks}
\end{figure}

In this section, simulation results are provided to confirm the accuracy of the derived delay optimal packetization criterion. For each graph, we performed Monte-Carlo simulation for at least $100,000$ packets and used the sample average in stationary states as surrogates for the expected values using the ergodicity of the queuing system. The simulation parameters are arbitrarily set to $\lambda = 10, N=16, H=40$ unless otherwise specified. Fig.~\ref{fig:Ks} presents the derived coefficient of variation for service time $C[\mathbf{s}]={\sigma_s}/{\mathbb{E}[s]}$. A perfect match between the analytically derived $C[\mathbf{s}]$ with the empirical results where mean and variance are obtained by time averaging, verifies the accuracy of equations (\ref{eq:Es1}), (\ref{eq:Es2}) and (\ref{eq:Es1-2}).  It is seen that $C[\mathbf{s}]$ increases as packetization time $T$ moves away from zero. This variation is due to the fact that for $T\rightarrow 0$, the packet length tends to include only one symbol and presents a fixed length of $N+H$. Hence, the service time which is proportional to the packet length presents low variations. For moderate $T$ values $\mu \approx 1$, the packet length presents more unpredictability. Furthermore, when $T$ grows to infinity, due to the accumulation of Poisson arrivals over long interval, the number of symbols tends to $\mathbb{E}[k]T=\lambda T$ due to the Law of large numbers. Consequently the packet lengths approach $\lambda T N+ H$ and present low variations. Therefore, coefficient of variation approaches zero for error free channels, $\text{PER}=0$. However, for an erroneous channel, larger packet lengths experience higher packet drop rates. In this case, service time encounters large variations due to variation in the number of packet re-transmissions. The peak of this graph for error free channels is corresponding to a packetization time that results in the most unpredictable service time.

\begin{figure}[t]
\centering
\includegraphics[width=1\columnwidth]{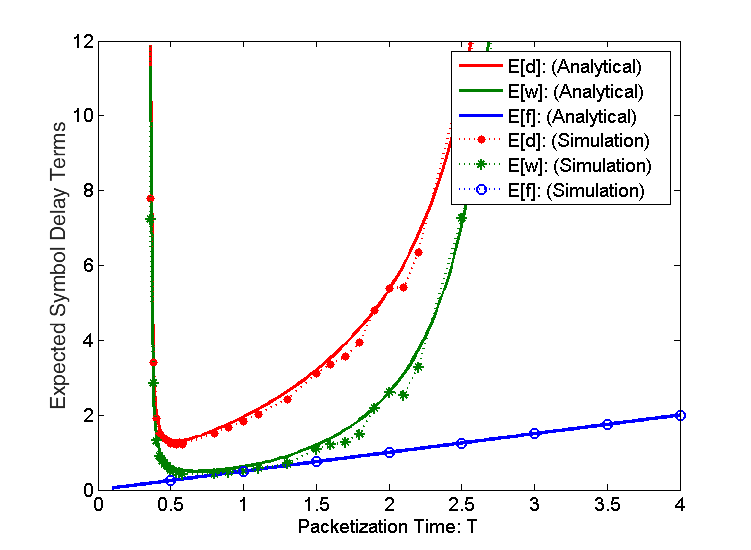}  
\caption{Various delay terms vs packetization interval $(N=16, H = 30, \lambda = 10, \beta=10^{-3},R=300, )$.}
\label{fig:delay_sim}
\end{figure}

Fig.~\ref{fig:delay_sim} presents the impact of packetization interval length ($T$) on various delay sources. Solid lines in this figure represent analytically derived delays, while dashed lines with markers represent the empirical values (sample means) obtained by Monte-Carlo simulations. The packet formation delay, which is shown with blue curve is proportionally related to $T$ as was intuitively expected and is discussed in section \ref{sec:packet-formation-delay}. 
It is noticeable that the expected waiting time $\mathbb{E}[\mathbf{w}]$ is a convex function of $T$, and so is the average end-to-end delay $\mathbb{E}[\mathbf{d}]$. Convexity of delay curves with respect to $T$ guarantees the uniqueness of the optimum packetization interval length for a given set of system parameters. These results suggest that by choosing an appropriate packetization interval, we can minimize the average end-to-end delay.

\begin{figure}[h]
\centering
\includegraphics[width=1\columnwidth]{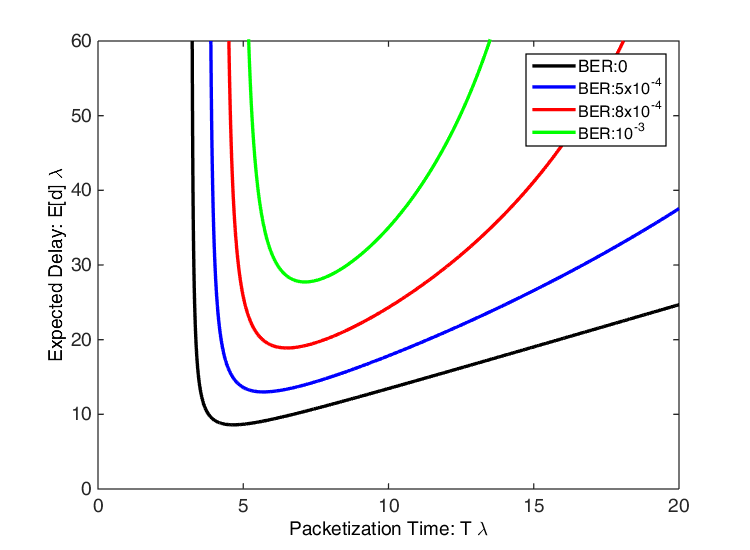}  
\caption{Expected delay $\mathbb{E}[\mathbf{d}]$ vs packetization interval $\lambda T$ for different channel error probabilities $(N=16, H = 30, \lambda = 10)$.}
\label{fig:delay_per}
\end{figure}
\begin{figure}[h]
\centering
\includegraphics[width=1\columnwidth]{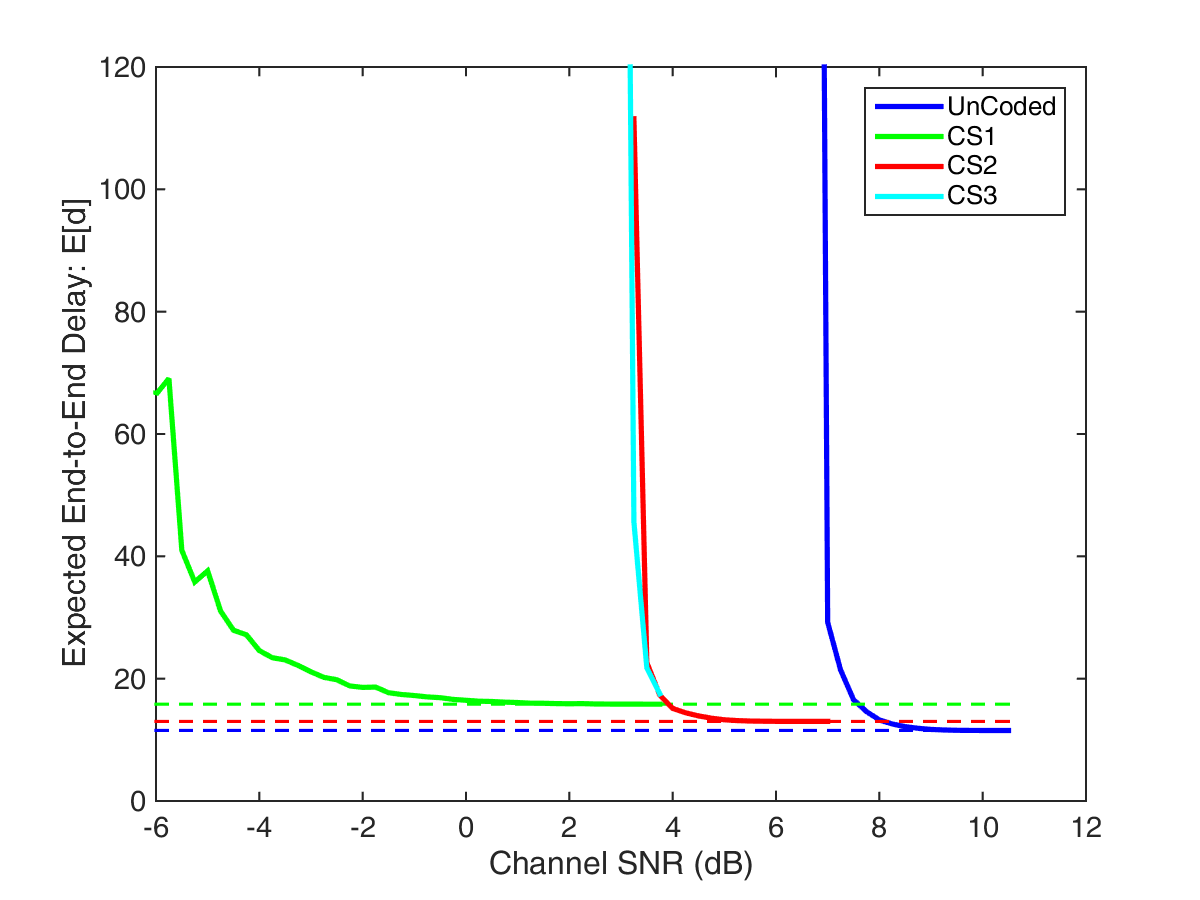}     
\caption{Expected delay $\mathbb{E}[d]$ vs Channel SNR for uncoded and coded system. In both CS1 and CS2, a RSC convolutional encoder with constraint length $7$ ans feedforward and feedback polynomials of $171$ hex and $133$ hex is used. In CS1, the coding rate for data and header is $R_D=R_H=1/2$, while in CS2, $R_D=R_H=3/4$ and in CS3, $R_D=3/4$ and $R_H=1/2$. Various coding rates are obtained using different puncturing patterns. The decoding scheme employs Viterbi algorithm. Other parameters are $\lambda=10, T=1.5, N=8, H=40, R=400$ bit/sec.}
\label{fig:EdCoded}
\end{figure}

Fig.~\ref{fig:delay_per} demonstrates the behavior of the expected average delay, $\mathbb{E}[\mathbf{d}]$ derived in (\ref{eq:Ed3}) with varying packetization interval $T$ for different channel qualities in terms of BER. It is shown that for small $T$, the high overhead cost may cause the average input rate (in terms of bit per second) exceed the service rate, hence the queue becomes unstable. The service rate apparently is lower for higher bit error probability that unstabilizes the queue for smaller input rates. On the other hand, for error-free channels, when $T$ grows to infinity, the packet inter-arrival times tend to be less than service time and therefore the dominant delay term is packet formation delay in (\ref{eq:f2}). Hence, the expected delay grows linearly with packetization interval length. For noisy channels with non-zero PER, longer packetization intervals produce a larger average packet lengths. Consequently, PER grows exponentially with $T$ and imposes larger service time, which in turn imposes longer waiting time on the queue that ultimately makes the queue unstable. Therefore, the growth of end-to-end delay with $T$ is more crucial for higher error probabilities. This shows the importance of the given analysis to find an optimal packetization interval, which is not only a function of input traffic properties, but also depends on the underlying channel quality. Therefore, ignoring physical layer parameters such as the utilized channel error-rate in the packet formation at higher layers of communications protocol might cause the whole transmission system to fail.

In order to investigate the effect of channel error probability in the proposed method, we evaluated the end-to-end delay for both uncoded and coded systems in Fig. \ref{fig:EdCoded} using the generalizations provided in section \ref{sec:sysmodel}. In this system, we use different coding rates for payload and header part of the produced packets similar to practical communication protocols \cite{IEEE802-11-1}. An immediate observation is that as the channel SNR improves, the end-to-end delay decreases, which is apparently due to a fewer number of re-transmissions. A more interesting observation is that an encoded system remains operational in a lower SNR range since the packet error rate (and consequent packet re-transmissions) remain low and hence the queue does not become unstable. The smaller the coding rate $R_D,R_H\rightarrow 0$, the longer header size is affordable before the queue becomes unstable. However, in high SNR regime, the uncoded system is desirable since it has shorter packet lengths for the same number of input symbols and consequently requires shorter service time. The coding scheme CS1 with code rates $R_H=1/2, R_D=1/2$  which uses more parity bits demonstrates a slightly lover minimum delay in high SNR regime compared to CS2. Delay floor effect is shown with slotted line in Fig. \ref{fig:EdCoded}.

\begin{figure}[t]
\centering
\includegraphics[width=1\columnwidth]{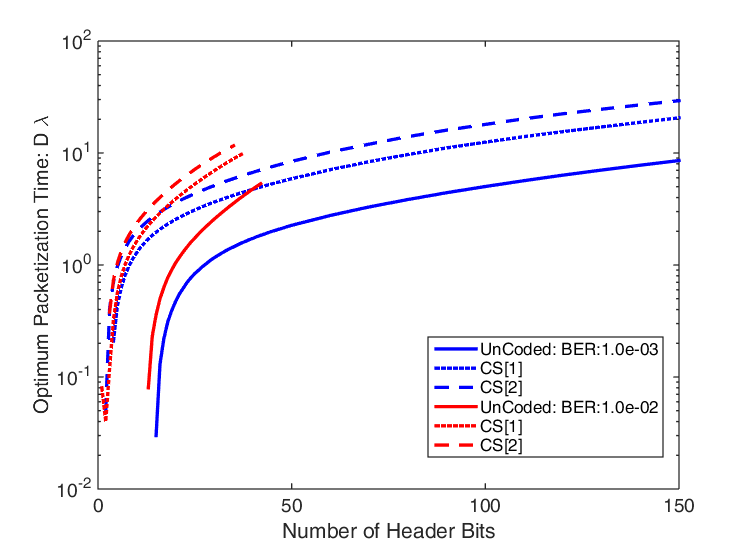}     
\caption{Optimal packetization interval vs header length $H$ for uncoded and coded systems with $N=8, \lambda = 10$, $R=400$ bit/sec. In coded scheme 1 (CS1), an encoder with rate $R_D=R_H=1/2$ and $\beta_D=\beta_H=\beta \text{(Uncoded)}/10$ is used. In coded scheme 2 (CS2), two encoders with rates $R_D=1/2$ and $R_H=1/3$  and corresponding bit error probability of $\beta_D=\beta/10$ and $\beta_H=\beta/100$ are used for payload data and header parts.}
\label{fig:delay2}
\end{figure}

Fig.~\ref{fig:delay2} presents the optimum packetization interval for different header sizes $H$ for both uncoded and coded systems. Obviously, a larger number of header bits demands higher packetization interval to compensate the waiting time caused by low header efficiency and intends to control the effective header bits per symbol $\frac{H}{\mu}$. For higher channel error, in the same header size, the optimum packetization interval is smaller to avoid a large number of re-transmissions. Another observation is that for a coded system with more parity bits and a better error rate (CS2 in Fig. \ref{fig:delay2}), the optimal packetization interval is set larger, since the number of retransmissions that may arise due to longer packet sizes are compensated by a better error correction property. Therefore, the system appropriately chooses an optimal packetization interval that minimizes the expected delay for both uncoded and coded systems. Thirdly, the system with higher error rate reaches (highlighted in red color in Fig.~\ref{fig:delay2}) the instability point with shorter header lengths due to more frequent packet retransmissions as expected.

\begin{figure}[h]
\centering
\includegraphics[width=1\columnwidth]{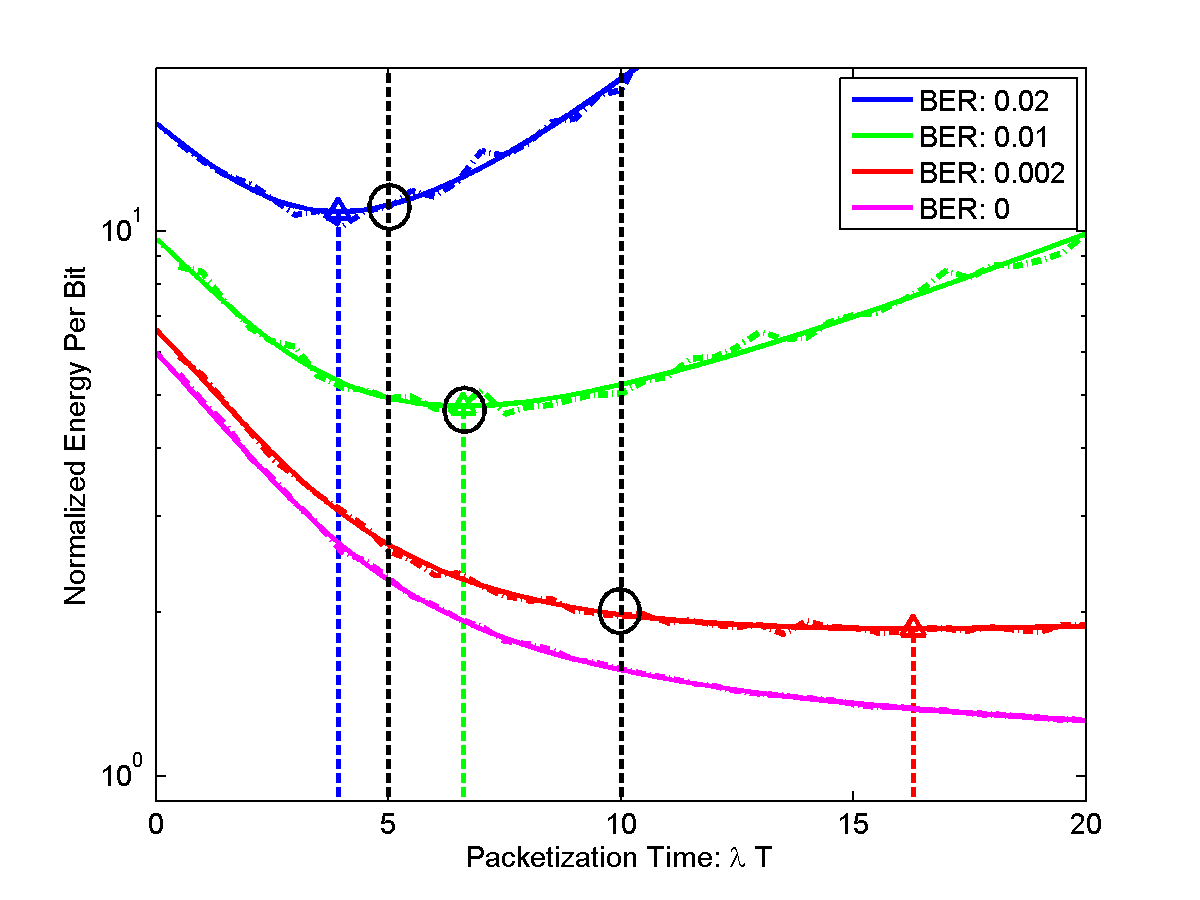}    
\caption{Energy Per Unit Bit vs Packetization Time $(N=8, H = 40, \lambda = 1)$.}
\label{fig:EF}
\end{figure}

In order to analyze the impact of packetization interval on Energy efficiency, the expected energy consumption per unit information bit (ECR) is depicted in Fig. \ref{fig:EF}. Due to the convexity of ECR(T) with respect to $T$, the most energy-efficient packetization interval $T^\ast$ can be found analytically or numerically evaluated using methods such as gradient descent. Energy efficiency is impacted by $T$ through two phenomena of header efficiency and number of re-transmissions. By increasing the packetization interval, the ratio of header bits to packet length ($\mathbb{E}[\frac{H}{H+\mathbf{k}N}]$) decreases, which in turn translates to a lower energy consumption per information bit. This reduction in average energy per bit continues until it is compensated by the extra average energy consumption required for re-transmission of longer packets, since the longer packets are more likely to be discarded. The global minimum corresponds to the balance between these two contradictory effects. It is observed that for error-free channels, the average energy consumption is decreasing function of $T$, since no re-transmission occurs. If one is interested in energy efficiency under constrained average delay ($\mathbb{E}[\mathbf{d}]<D_0$), the global minimum may fall out of the valid range of $T$ and hence one of the corner points (solutions of $\mathbb{E}[\mathbf{d}]=D_0$) provides the most energy efficient solution. For instance, if the average delay constraint implies $T \in [5/\lambda \sim 10/\lambda ]$ as depicted by dashed black color in this figure, the global minimum of energy curve for $\beta=0.01$ falls in the valid region and hence it is achievable. However, the packetization interval that minimizes the expected energy consumption falls out of the valid range of packetization interval for $\beta=0.02$ and $\beta=0.002$. Therefore, in such cases we need to choose one of the left and right corner points namely $T=5/\lambda$ and $T=10/\lambda$, which are imposed by delay constraint, as highlighted by circle in this figure.

\section *{Acknowledgment}
The authors would like to thank Prof. Anthony Ephremides and Dr. Jeongho Jeon for their thoughtful comments and discussions on the earlier version of this paper. 

\section {Concluding Remarks} \label{sec:conclusion}
In this paper, the impact of packetization interval length on the end-to-end latency is investigated to provide an optimal policy to encapsulate Poisson arrival symbols into transmit packets. Closed-form expressions are derived for both expected end-to-end delay and energy efficiency through queuing system analysis. It was noticed that the Poisson arrival symbols yield three distributions for packet inter-arrivals including: exponential, Geometric, and Deterministic as $T$ departs from zero to infinity.

It was also shown that a small packetization interval reduces header efficiency, which may cause the packet rates to exceed the service rate, unstabilize the queue and impose extremely large delay. On the other hand, a larger packetization interval increases the expected delay due to longer wait to form a packet as well as delay arose from more frequent re-transmissions. This suggests that packet formation policy design should incorporate underlying physical layer parameters such as channel rate and bit error probability in addition to network-based traffic statistics. The proposed policy not only optimizes the performance metrics such as delay and energy efficiency, but also avoids the transmission system crash due to queue instability. This delay-optimal packetization policy for a single hop communication is essential in sensor network entry points, where measurement symbols are bundled into packets. This fundamental study can be further extended to more complicated system setups in Ad-hoc networks and can also be integrated with delay-minimal scheduling policies to reduce the overall system delivery time.

\appendices

\appendix
\textbf{Proof of proposition \ref{prop:2}}:
In order to prove the proposition, we first state the following lemmas:
\begin{lemma} \label{lem:1}
If $\mathbf{k}$ is a Poisson r.v. with mean $\mu$, then $\mathbb{E}_\mathbf{k}[\frac{\mathbf{k}^n}{\zeta^\mathbf{k}}] =  e^{-\mu (1-1/ \zeta)} \sum_{i=1}^{n}S_2(n,i)  (\mu / \zeta)^i$ for $n=1,2,3...$, where $S_2(n,i)$ is the Stirling number of the second kind that counts the number of ways to partition a set of $n$ elements into $i$ nonempty subsets~\cite{Stirling1}.
\end{lemma}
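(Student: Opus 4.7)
The plan is to reduce the expectation to a standard Poisson moment identity. First I would write out the expectation using the Poisson pmf and factor out an exponential:
\begin{align*}
\mathbb{E}_{\mathbf{k}}\!\left[\frac{\mathbf{k}^n}{\zeta^{\mathbf{k}}}\right]
&= \sum_{k=0}^{\infty} \frac{k^n}{\zeta^k}\cdot \frac{e^{-\mu}\mu^k}{k!} \\
&= e^{-\mu(1-1/\zeta)} \sum_{k=0}^{\infty} k^n \cdot \frac{e^{-\mu/\zeta}(\mu/\zeta)^k}{k!},
\end{align*}
obtained by multiplying and dividing by $e^{-\mu/\zeta}$. The remaining sum is precisely $\mathbb{E}[(\mathbf{k}')^{n}]$ for an auxiliary Poisson random variable $\mathbf{k}'$ with mean $\lambda = \mu/\zeta$, so the task reduces to computing the $n$-th raw moment of a Poisson variable in closed form.

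Next I would invoke the classical expansion $k^n = \sum_{i=0}^{n} S_2(n,i)\, k^{\underline{i}}$, where $k^{\underline{i}} = k(k-1)\cdots(k-i+1)$ denotes the falling factorial. Taking expectation termwise and using the standard fact that a Poisson variable with mean $\lambda$ satisfies $\mathbb{E}[(\mathbf{k}')^{\underline{i}}] = \lambda^i$ (which itself follows from $k^{\underline{i}}/k! = 1/(k-i)!$ and a shift of the exponential series), one obtains $\mathbb{E}[(\mathbf{k}')^{n}] = \sum_{i=0}^{n} S_2(n,i)\,\lambda^i$. Because $S_2(n,0)=0$ for $n\geq 1$, the $i=0$ term vanishes and the sum starts at $i=1$. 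Substituting $\lambda = \mu/\zeta$ back and reinstating the prefactor $e^{-\mu(1-1/\zeta)}$ yields exactly the claimed expression.

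The main obstacle here is essentially bookkeeping rather than genuine difficulty: the critical move is recognizing that factoring out $e^{-\mu(1-1/\zeta)}$ converts the sum into an ordinary Poisson expectation of a polynomial in $\mathbf{k}$, after which the Stirling-number identity does all the remaining work. Absolute convergence of every series is automatic for $\zeta>0$ and finite $\mu$, so the interchange of the finite $i$-sum with the infinite $k$-sum is justified without further comment. I would simply cite the Stirling-number expansion of ordinary powers into falling factorials rather than re-derive it from the definition of $S_2(n,i)$.
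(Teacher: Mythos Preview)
Your proposal is correct and follows essentially the same route as the paper: factor out $e^{-\mu(1-1/\zeta)}$ to reduce to the $n$-th moment of a Poisson variable with mean $\mu/\zeta$, expand $k^n$ in falling factorials via Stirling numbers of the second kind, and use that the $i$-th factorial moment of a Poisson($\lambda$) variable equals $\lambda^i$. The only cosmetic difference is that the paper computes the factorial moments via the factorial moment generating function rather than the direct series shift you sketch, and you are slightly more explicit about why the sum starts at $i=1$.
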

\begin{proof}
First, we notice that:
\begin{align} \label{eq:kn1}
\nonumber
&\mathbb{E}_{\mathbf{k}\sim\mathbf{Poiss}(\mu)}[\frac{\mathbf{k}^n}{\zeta^\mathbf{k}}] = \sum_{k=0}^{\infty} \frac{k^n}{\zeta^k} e^{-\mu} \frac{\mu^k}{k!} = e^{-\mu (1-1/ \zeta)} \times
\\
 &\sum_{k=1}^{\infty} k^n e^{-\mu / \zeta} \frac{(\mu/ \zeta)^k}{k!} = e^{-\mu (1-1/ \zeta)} \mathbb{E}_{\mathbf{k\sim\mathbf{Poiss}(\mu/\zeta)}}[\mathbf{k}^n].
\end{align}
where $k\sim\mathbf{Poiss}(\mu)$ is to reflect that $k$ is a Poisson r.v. with mean($\mu$). It was shown in ~\cite{Stirling2} that $k^n$ can be represented in the form of falling factorials as 
\begin{align} \label{eq:str1}
k^n=\sum_{i=0}^{n} S_2(n,i)  k_{(i)},
\end{align}
where $k_{(i)}=k(k-1)...(k-n+1)$ is the falling factorial. 
Combining (\ref{eq:kn1}) and (\ref{eq:str1}) results in
\begin{align} \label{eq:kn1-str1}
\mathbb{E}_{\mathbf{k}\sim\mathbf{Poiss}(\mu)}[\frac{k^n}{\zeta^k}] &= e^{-\mu (1-1/ \zeta)} \sum_{i=0}^{n} S_2(n,i) \mathbb{E}_{\mathbf{k}\sim\mathbf{Poiss}(\mu/ \zeta)}[\mathbf{k}_{(i)}].
\end{align}
$\mathbb{E}_\mathbf{k}[\mathbf{k}_{(i)}]$ can be easily found by factorial moment generation function for a Poisson distribution as follows: 
\begin{align} \label{eq:kn2}
\nonumber \mathbb{E}_{\mathbf{k}(\mu/ \zeta)}[\mathbf{k}_{(i)}]&=\frac{d^i}{dt^i}\mathbb{E}_{\mathbf{k}(\mu/ \zeta)}[t^{\mathbf{k}}] \mid_{t=0}= \frac{d^i}{dt^i} e^{\mu/ \zeta(t-1)}\mid_{t=0}\\
& = \mu^i e^{\mu/ \zeta(t-1)}\mid_{t=0} = (\mu/ \zeta)^i
\end{align}
Substituting (\ref{eq:kn2}) in (\ref{eq:kn1-str1}) completes the proof.

\end{proof}

\textit{Corollary:} As special cases for $n=1,2$, we have $S_2(n,0)=0$ and $S_2(1,1)=S_2(2,1)=S_2(2,2)=1$. Therefore:
\begin{align} 
\label{eq:kn2-n1}
\mathbb{E}_\mathbf{k}[\frac{\mathbf{k}}{\zeta^\mathbf{k}}] &=  \frac{\mu}{\zeta} e^{-\mu (1-1/ \zeta)},   \\
\label{eq:kn2-n2}
\mathbb{E}_\mathbf{k}[\frac{\mathbf{k}^2}{\zeta^\mathbf{k}}] &=  \frac{\mu}{\zeta} (1+ \frac{\mu}{\zeta})e^{-\mu (1-1/ \zeta)}. 
\end{align}

\begin{lemma}  \label{lem:2}
If $\mathbf{k}$ is a Poisson r.v. with parameter $\mu$ and $x(\mathbf{k})$ is defined as $x(\mathbf{k})\triangleq \frac{\big(h(\mathbf{k})\big)^m}{\zeta^k}$ for $m=1,2,3,...$, then $\mathbb{E}_\mathbf{k}[x(k)]=  e^{-\mu} (e^{\mu/ \zeta}-1)$. 

The proof follows from definition of expected value. We notice that $\big(h(\mathbf{k})\big)^m=h(\mathbf{k})$. Thus,
\begin{align}
\nonumber 
&x(\mathbf{k})=\begin{cases} \frac{1}{\zeta^\mathbf{k}} & \mathbf{k} \neq 0\\0 &\mathbf{k}=0 \end{cases} \Longrightarrow 
\mathbb{E}_{\mathbf{k}}[x(\mathbf{k})]&=\sum_{k=0}^{\infty} x (k)p_\mathbf{k}(k)\\
\nonumber 
&=\sum_{k=0}^{\infty} \frac{1}{\zeta^k}p_\mathbf{k}(k)-p_\mathbf{k}(k=0)\frac{1}{\zeta^0}\\ 
&= \mathbb{E}_\mathbf{k}[\frac{1}{\zeta^k}]-e^{-\mu}=  e^{-\mu} (e^{\mu/ \zeta}-1). 
\end{align}
\end{lemma}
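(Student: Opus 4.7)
The plan is to exploit the fact that the step function $h$ takes only the values $0$ and $1$, which makes $h$ idempotent under any positive integer power. First I would observe that $\big(h(\mathbf{k})\big)^m = h(\mathbf{k})$ for every $m\in\{1,2,3,\dots\}$, so the exponent $m$ plays no role and $x(\mathbf{k})$ reduces to $\zeta^{-\mathbf{k}}$ on the event $\{\mathbf{k}\geq 1\}$ and to $0$ on $\{\mathbf{k}=0\}$. This collapses the problem to evaluating a single elementary sum.

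Next I would write out the expectation as a Poisson sum, use idempotence to drop $m$, and separate out the $k=0$ term so that the resulting series matches the Taylor expansion of the exponential. Concretely, I would proceed as
\begin{align*}
\mathbb{E}_{\mathbf{k}}[x(\mathbf{k})]
&=\sum_{k=0}^{\infty}\frac{\big(h(k)\big)^m}{\zeta^{k}}\,e^{-\mu}\frac{\mu^{k}}{k!}
=\sum_{k=1}^{\infty}\frac{1}{\zeta^{k}}\,e^{-\mu}\frac{\mu^{k}}{k!}\\
&=e^{-\mu}\sum_{k=1}^{\infty}\frac{(\mu/\zeta)^{k}}{k!}
=e^{-\mu}\bigl(e^{\mu/\zeta}-1\bigr),
\end{align*}
where the last equality uses $\sum_{k=0}^{\infty}(\mu/\zeta)^{k}/k!=e^{\mu/\zeta}$ and removes the $k=0$ term that was already excluded on the left.

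There is essentially no obstacle here; the only subtle point worth flagging is the justification for $\big(h(\mathbf{k})\big)^m=h(\mathbf{k})$, which follows immediately from $h(\mathbf{k})\in\{0,1\}$ (via the definition of $h$ in equation (\ref{eq:hk})). I would also note, for completeness, that the result can be written as the difference $\mathbb{E}_{\mathbf{k}}[\zeta^{-\mathbf{k}}]-\mathbb{P}(\mathbf{k}=0)$; evaluating the first term by the probability generating function of a Poisson at $t=1/\zeta$ gives $e^{-\mu(1-1/\zeta)}=e^{-\mu}e^{\mu/\zeta}$, and subtracting $\mathbb{P}(\mathbf{k}=0)=e^{-\mu}$ reproduces $e^{-\mu}(e^{\mu/\zeta}-1)$. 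This alternative derivation is the one the paper will likely invoke elsewhere when it cites this lemma in step $(a)$ of equation (\ref{eq:Ef}), so presenting both forms makes the lemma maximally convenient to reuse.
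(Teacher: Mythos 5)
Your proof is correct and follows essentially the same route as the paper: both exploit the idempotence $\big(h(\mathbf{k})\big)^m=h(\mathbf{k})$ and then evaluate $\mathbb{E}_{\mathbf{k}}[\zeta^{-\mathbf{k}}]-\mathbb{P}(\mathbf{k}=0)=e^{-\mu}(e^{\mu/\zeta}-1)$, whether by summing the series directly or via the probability generating function. No gaps.
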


\begin{lemma}  \label{lem:3}
If $\mathbf{k}$ is a Poisson r.v. with parameter $\mu$ and $x(\mathbf{k})$ is defined as $x(\mathbf{k})\triangleq \frac{\big(h(\mathbf{k})\big)^m \mathbf{k}^n}{\zeta^\mathbf{k}},$ where $m,n \in \{1,2,3,...\}$, then $\mathbb{E}_\mathbf{k}[x(\mathbf{k})]=   e^{-\mu (1-1/ \zeta)} \sum_{i=1}^{n} {n \choose i} (\mu / \zeta)^i$. 

The proof immediately follows from lemma \ref{lem:1} and noting the fact that $x(\mathbf{k})= \frac{h(\mathbf{k})^m \mathbf{k}^n}{\zeta^\mathbf{k}} = \frac{\mathbf{k}^n}{\zeta^\mathbf{k}}$ for any positive integer $m$.
\end{lemma}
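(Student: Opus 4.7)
The plan is to reduce Lemma~\ref{lem:3} directly to Lemma~\ref{lem:1} by observing that the indicator factor $\bigl(h(\mathbf{k})\bigr)^m$ is redundant and contributes nothing once it multiplies a positive power of $\mathbf{k}$. First I would note that $h(\cdot)$ is the step function, so it takes only the values $0$ and $1$; hence $\bigl(h(\mathbf{k})\bigr)^m = h(\mathbf{k})$ for every positive integer $m$, and the exponent $m$ carries no information. Second, for every $n \geq 1$ the pointwise identity $h(\mathbf{k})\,\mathbf{k}^n = \mathbf{k}^n$ holds on the entire Poisson sample space: on $\{\mathbf{k}\geq 1\}$ the indicator equals $1$ and the identity is trivial, while on $\{\mathbf{k}=0\}$ both sides vanish because $n\geq 1$. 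Combining these two observations gives
\[
x(\mathbf{k}) \;=\; \frac{\bigl(h(\mathbf{k})\bigr)^m\,\mathbf{k}^n}{\zeta^{\mathbf{k}}} \;=\; \frac{\mathbf{k}^n}{\zeta^{\mathbf{k}}}
\]
as equal functions of $\mathbf{k}$, without any measure-zero caveats.

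Taking expectations of both sides and invoking Lemma~\ref{lem:1} then supplies the claimed closed form for $\mathbb{E}_{\mathbf{k}}[x(\mathbf{k})]$, with the coefficients of the powers $(\mu/\zeta)^i$ inherited directly from Lemma~\ref{lem:1} (i.e.\ Stirling numbers of the second kind $S_2(n,i)$). The main obstacle is purely conceptual rather than computational: one has to recognize that multiplying by the indicator $\mathbf{1}\{\mathbf{k}\neq 0\}$ is pointless whenever that indicator already multiplies a strictly positive power of $\mathbf{k}$, because the only event on which the indicator differs from $1$ is exactly the event on which $\mathbf{k}^n$ vanishes. Once that reduction is spotted, no new summation, differentiation of a generating function, or Stirling-number manipulation is needed beyond what Lemma~\ref{lem:1} already provides.
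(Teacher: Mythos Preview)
Your argument is correct and is exactly the paper's own proof: the paper also observes that $\bigl(h(\mathbf{k})\bigr)^m\mathbf{k}^n=\mathbf{k}^n$ for positive integers $m,n$ and then invokes Lemma~\ref{lem:1} directly. Your remark that the coefficients inherited from Lemma~\ref{lem:1} are Stirling numbers $S_2(n,i)$ rather than the binomial coefficients $\binom{n}{i}$ appearing in the statement is well taken; this appears to be a typographical slip in the lemma as stated.
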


Now, we proceed with deriving the moments of the auxiliary service time $\tilde{\mathbf{s}}$, then calculate those of $\mathbf{s}$. We note that the re-transmission parameter, $\mathbf{r}$ is Geometrically distributed with success parameter $(1-\beta)^{l}=\alpha^{H+kN}$ which dependent on the packet length. Hence, its first and second order moments are $\mathbb{E}_\mathbf{r}[\mathbf{r}]=\alpha^{-(H+\mathbf{k}N)}$ and $\mathbb{E}_\mathbf{r}[\mathbf{r}^2]=\frac{2-\alpha^{H+\mathbf{k}N}}{\alpha^{2(H+\mathbf{k}N)}}$, which are functions of $\mathbf{k}$. Therefore, using Lemmas \ref{lem:1}, \ref{lem:2} and \ref{lem:3}, the moments of $\tilde{\mathbf{s}}$ can be calculated as
\begin{align}   \label{eq:Es1-proof}
\nonumber 
\mathbb{E}[\tilde{\mathbf{s}}] &= \mathbb{E}_\mathbf{k}\big[\mathbb{E}_\mathbf{r}[\tilde{\mathbf{s}}]\big] 
=\mathbb{E}_\mathbf{k}\big[\frac{h(\mathbf{k})H+kN}{R}\mathbb{E}_\mathbf{r}[\mathbf{r}]\big] \\
\nonumber
&=\mathbb{E}_\mathbf{k} \Big [\frac{h(\mathbf{k})H+\mathbf{k}N}{R\alpha^{H+\mathbf{k}N}} \Big] \\
\nonumber
&=\frac{H}{R\alpha^{H}}\mathbb{E}_\mathbf{k} \Big [\frac{h(\mathbf{k})}{(\alpha^{N})^\mathbf{k}} \Big] +
\frac{N}{R\alpha^{H}}\mathbb{E}_\mathbf{k} \Big [\frac{\mathbf{k}}{(\alpha^{N})^\mathbf{k}} \Big] \\
\nonumber
&=\frac{H}{R\alpha^{H}}  \big ( e^{-\mu}(e^{\mu \alpha^{-N}} -1) \big)   + \frac{\mu N}{R\alpha^{H}\alpha^N} e^{-\mu (1-\alpha^{-N}) } \Big]\\
&=\frac{Ne^{-\mu}}{R\alpha^{H}} \Big [  (\eta + \mu{\alpha}^{-N}) e^{ \mu \alpha^{-N} } -\eta  \Big],
\end{align}
where $\eta=H/N$ denotes the ratio of header size to symbol size.
Similarly we have
\begin{align}  \label{eq:Es2-proof}
\nonumber
\mathbb{E}[\tilde{\mathbf{s}}^2] &= \mathbb{E}_\mathbf{k}[\mathbb{E}_\mathbf{r}(\tilde{\mathbf{s}}^2)] 
=\mathbb{E}_\mathbf{k}\big[\frac{(h(\mathbf{k})H+\mathbf{k}N)^2}{R^2}\mathbb{E}[\mathbf{r}^2]\big] 
\\ \nonumber
& =\mathbb{E}_\mathbf{k} \Big [\frac{(h(\mathbf{k})H+\mathbf{k}N)^2(2-\alpha^{H+\mathbf{k}N})}{R^2\alpha^{2(H+\mathbf{k}N)}} \Big]\\
\nonumber
&=\frac{2H^2}{R^2\alpha^{2H}}\mathbb{E}_\mathbf{k} \Big [\frac{h^2(\mathbf{k})}{(\alpha^{2N})^\mathbf{k}} \Big] 
+\frac{2N^2}{R^2\alpha^{2H}}\mathbb{E}_\mathbf{\mathbf{k}} \Big [\frac{\mathbf{k}^2}{(\alpha^{2N})^\mathbf{k}} \Big]\\
\nonumber
& +\frac{4NH}{R^2\alpha^{2H}}\mathbb{E}_\mathbf{k} \Big [\frac{\mathbf{k} h(\mathbf{k})}{(\alpha^{2N})^\mathbf{k}} \Big]  -\frac{H^2}{R^2\alpha^{H}}\mathbb{E}_\mathbf{k} \Big [\frac{h^2(\mathbf{k})}{(\alpha^{N})^\mathbf{k}} \Big] \\
& -\frac{N^2}{R^2\alpha^{H}}\mathbb{E}_\mathbf{k} \Big [\frac{\mathbf{k} ^2}{(\alpha^{N})^\mathbf{k}} \Big] 
-\frac{2NH}{R^2\alpha^{H}}\mathbb{E}_\mathbf{k} \Big [\frac{\mathbf{k}h(\mathbf{k})}{(\alpha^{N})^\mathbf{k}} \Big]
\end{align}
After substituting the relevant expected values noting Lemma \ref{lem:1} and \ref{lem:2} and some simple manipulations, we obtain:
\begin{align}  \label{eq:Es2-middle0}
\nonumber
\mathbb{E}[\tilde{\mathbf{s}}^2] &=\frac{2H^2}{R^2\alpha^{2H}}\Big [   e^{-\mu}(e^{\mu \alpha^{2N}}-1) \Big] \\
\nonumber
&+\frac{2N^2}{R^2\alpha^{2H}}\Big [e^{-\mu(1-\alpha^{2N})} (2 \mu \alpha^{2N} + \mu^2 \alpha^{2N})  \Big] \\
\nonumber
&+\frac{4NH}{R^2\alpha^{2H}}\Big [e^{-\mu(1-\alpha^{2N})}(\mu \alpha^{2N}) \Big] -\frac{H^2}{R^2\alpha^{H}}\Big [e^{-\mu}(e^{\mu \alpha^{N}}-1)  \Big] \\
\nonumber
& -\frac{N^2}{R^2\alpha^{H}}\Big [e^{-\mu(1-\alpha^{N})} (2 \mu \alpha^{N} + \mu^2 \alpha^{2N})  \Big] \\
&-\frac{2NH}{R^2\alpha^{H}}\Big [  e^{-\mu(1-\alpha^{N})}(\mu \alpha^{N})\Big]
\end{align}

\begin{align} \label{eq:Es2-proof-2}
\nonumber
\mathbb{E}[\tilde{\mathbf{s}}^2]&=\frac{N^2 e^{-\mu}}{R^2 \alpha^H}  \Big [ \big ( 2 \mu \alpha^{-2N} + 2 \lambda^2 T^2 \alpha^{-4N} + 4\eta \mu \alpha^{-2N} + 2\eta^2 \big)\\
\nonumber
&. \alpha^{-H}  e^{\mu \alpha^{-2N}} 
-\big (  \mu \alpha^{-N} +  \lambda^2 T^2 \alpha^{-2N} 
+ 2\eta \mu \alpha^{-N} + \eta^2 \big) \\
&.e^{\mu \alpha^{-N}} - \big(2\eta^2\alpha^{-H}-\eta^2\big)  \Big ].
\end{align}
Since no packet is generated for zero symbol accumulation during an interval, we exclude the packets with zero symbols and length $H$ according to (\ref{eq:ss}) to obtain the moments for service time as follows:  
\begin{align} \label{eq:Es1-2}
\nonumber
&\mathbb{E}[\tilde{\mathbf{s}}^n]=\mathbb{E}[\tilde{\mathbf{s}}^n|\mathbf{k}=0]\mathbb{P}(\mathbf{k}=0)+\mathbb{E}[\tilde{\mathbf{s}}^n|\mathbf{k} \neq 0]\mathbb{P}(\mathbf{k} \neq0)\\
&=\mathbb{E}[\mathbf{s}^n]\mathbb{P}(\mathbf{k} \neq0) \Longrightarrow \mathbb{E}[\mathbf{s}^n]= \frac{\mathbb{E}[\tilde{\mathbf{s}}^n]}{1-e^{-\mu}}.
\end{align}

Substituting (\ref{eq:Es1-proof}) and (\ref{eq:Es2-proof}) in (\ref{eq:Es1-2}) completes the proof.\\

\noindent
\textbf{Proof of lemma \ref{lem:Ef}}:
In order to proof (\ref{eq:Eflemma}), we define $f_{\xi}(\mu)=E_\mathbf{k}[\frac{1}{\mathbf{k} \xi^\mathbf{k}}]$. Noting $\mathbb{P}(\mathbf{k}=k)=\frac{1}{1-e^{-\mu}}e^{-\mu} \mu^k/k!$ we have
\begin{align} \label{eq:Eflemma-proof}
\nonumber
f_\xi(\mu)&=\mathbb{E}[\frac{1}{\mathbf{k} \xi^\mathbf{k}}]=\sum\limits_{k=1}^{\infty} \frac{1}{k\xi^k} 
\frac{e^{-\mu}\mu^k}{(1-e^{-\mu})k!}\\
&= \frac{e^{\mu/\xi-\mu}}{(1-e^{-\mu})} \sum\limits_{k=1}^{\infty} \frac{(\mu/\xi)^k e^{-\mu/eta}}{k. k!} 
\end{align}

For notation convenience, we define $g_\xi(x)$ with change of variables as follows:
\begin{align} \label{eq:Eflemma-proof2}
g_\xi(x) = (1-e^{-\mu})e^{\mu-\mu/\xi} f_\xi(\mu)|_{x=\mu/\xi}= \sum\limits_{k=1}^{\infty} \frac{x^k e^{-x}}{k. k!} 
\end{align}
Now, we take derivative of (\ref{eq:Eflemma-proof2}) with respect to $x$ to obtain the following ordinary differential equation (ODE):
\begin{align} \label{eq:Eflemma-proof3}
\nonumber
\frac{dg_{\xi}(x)}{dx}&=\sum\limits_{k=1}^{\infty} \frac{1}{k. k!} \frac{d}{dx}(x^k e^{-x})=\sum\limits_{k=1}^{\infty} \frac{1}{k. k!} (kx^{k-1} -x^k e^{-x})\\
\nonumber
&= \frac{1}{x}\sum\limits_{k=1}^{\infty} \frac{1}{k!} x^{k}e^{-x} - g_\xi(x)= \frac{1-e^{-x}}{x} - g_\xi(x)\\
\Longleftrightarrow& \frac{dg_{\xi}(x)}{dx}+g_\xi(x)= \frac{1-e^{-x}}{x}
\end{align}
The ODE in (\ref{eq:Eflemma-proof3}) can be solved as follows:
\begin{align} \label{eq:Eflemma-proof4}
\nonumber
& \frac{dg_{\xi}(x)}{dx}+g_\xi(x)= \frac{1-e^{-x}}{x}\\
\nonumber
&\Longleftrightarrow \frac{dg_{\xi}(x)}{dx} e^x+g_\xi(x) e^x= \frac{e^x-1}{x}\\
\nonumber
&\Longleftrightarrow \frac{d}{dx}(g_{\xi}(x)e^x)= \frac{e^x-1}{x}\\
\nonumber
&\Longleftrightarrow g_{\xi}(x)e^x= \int_{\infty}^{x} t^{-1} e^t dt - \int_{\infty}^{x} t^{-1} dt = \mathcal{E}(x)-\log(x) + c\\
&\Longleftrightarrow g_{\xi}(x)= e^{-x}\big(\mathcal{E}(x)-\log(x)+c\big)
\end{align}
where $\log(.)$ is the natural logarithm and $\mathcal{E}(x)=\int_{-\infty}^{x} \frac{e^{t}}{t}\text{dt}$ is called the exponential integral. Noting $\lim_{x \to 0}g_{\xi}(x) = 0$, the constant $c$ is determined to be negative of Euler constant, $c=-\gamma$.  The value of $\mathcal{E}(x)$ can be readily evaluated using available tables, recursive algorithms and Taylor series \cite{Euler2013}.
Now, using we substitute (\ref{eq:Eflemma-proof4}) in (\ref{eq:Eflemma-proof2}) to obtain the result:
\begin{align} \label{eq:Eflemma-proof2}
f_\xi(\mu) = \frac{e^{\mu/\xi-\mu}}{(1-e^{-\mu})}g_\xi(x)  |_{\mu=x \xi}=\frac{\mathcal{E}(\mu/\eta)-\log(\mu/\eta)-\gamma}{e^\mu-1}
\end{align}

\bibliographystyle{IEEEtran}	
\bibliography{BibsPD1}

\end{document}